\newcommand{\arxiv}[2]{\ifbool{\usingarxiv}{#1}{#2}} 
\newcommand{\norm}[1]{\rVert #1\lVert}                  
    \theoremstyle{plain}
    \newtheorem{assumption}{Assumption}
    \newtheorem{proposition}{Proposition}
    \newtheorem{theorem}{Theorem}
    \newtheorem{lemma}{Lemma}
    \theoremstyle{remark}
    \newtheorem*{remark}{Remark}
    \theoremstyle{definition}
    \newtheorem{definition}{Definition}
\newcommand{\bbR}{\mathbb{R}}
\newcommand{\bbN}{\mathbb{N}}
\newcommand{\calA}{\mathcal{A}}
\newcommand{\calX}{\mathcal{X}}
\newcommand{\calU}{\mathcal{U}}
\newcommand{\calR}{\mathcal{R}}
\newcommand{\calE}{\mathcal{E}}
\newcommand{\calK}{\mathcal{K}}
\newcommand{\calT}{\mathcal{T}}
\newcommand{\calW}{\mathcal{W}}
\newcommand{\calV}{\mathcal{V}}
\newcommand{\calD}{\mathcal{D}}
\newcommand{\calF}{\mathcal{F}}
\newcommand{\calG}{\mathcal{G}}
\newcommand{\calH}{\mathcal{H}}
\newcommand{\calJ}{\mathcal{J}}
\newcommand{\calI}{\mathcal{I}}
\newcommand{\calN}{\mathcal{N}}
\newcommand{\calM}{\mathcal{M}}
\newcommand{\calL}{\mathcal{L}}
\newcommand{\calS}{\mathcal{S}}
\newcommand{\bx}{\ensuremath{\mathbf{x}}}
\newcommand{\bu}{\ensuremath{\mathbf{u}}}
\newcommand{\diam}[1]{\textup{diam}(#1)}
\newcommand{\levset}[2]{\textup{lev}_{#1}#2}
\newcommand{\set}[2]{\{#1 \: | \: #2\}}
\title{\LARGE \bf  Layered Nonlinear Model Predictive Control for\\ Robust Stabilization of Hybrid Systems}
\author{
    Zachary Olkin$^{1}$ and Aaron D. Ames$^{1}$
    \thanks{This research is supported by the Technology Innovation Institute (TII) and this material is based upon work supported by the National Science Foundation Graduate Research Fellowship.}
    \thanks{$^{1}$Authors are with the Department of Control and Dynamical Systems, California Institute of Technology, Pasadena CA 91125, U.S.A. \texttt{\{zolkin, ames\}@caltech.edu}.}
}
\date{September 2024}
\begin{document}

\maketitle

\begin{abstract}
Computing the receding horizon optimal control of nonlinear hybrid systems is typically prohibitively slow, limiting real-time implementation. To address this challenge, we propose a layered Model Predictive Control (MPC) architecture for robust stabilization of hybrid systems. A high level ``hybrid" MPC is solved at a slow rate to produce a stabilizing hybrid trajectory, potentially sub-optimally, including a domain and guard sequence.  This domain and guard sequence is passed to a low level ``fixed mode" MPC which is a traditional, time-varying, state-constrained MPC that can be solved rapidly, e.g., using nonlinear programming (NLP) tools. 
A robust version of the fixed mode MPC is constructed by using tracking error tubes that are not guaranteed to have finite size for all time.
Using these tubes, we demonstrate that the speed at which the fixed mode MPC is re-calculated is directly tied to the robustness of the system, thereby justifying the layered approach.
Finally, simulation examples of a five link bipedal robot and a controlled nonlinear bouncing ball are used to illustrate the formal results.
\end{abstract}

\section{Introduction}
Hybrid system models of robots are ubiquitous because of their ability to represent changes in contact between the robot and the environment.  The classic example is walking robots, where the feet interacting with the world result in discrete jumps in the robot's states and transitions to different underlying continuous dynamics.  This paper aims to design robust stabilizing controllers that allow robots to interact with their environment, e.g., by locomoting.
In particular, we propose a layered architecture that decomposes the controller into three distinct layers (Fig. \ref{fig:hero}): a hybrid MPC, a fixed mode MPC, and a low level controller. The hybrid MPC chooses which guards and domains to traverse. The fixed mode MPC uses these domains and guards to compute a robustly stabilizing trajectory at a fast rate. Finally, the low level controller tracks the MPC trajectory. 

\begin{figure}
    \centering
    \includegraphics[width=0.85\linewidth]{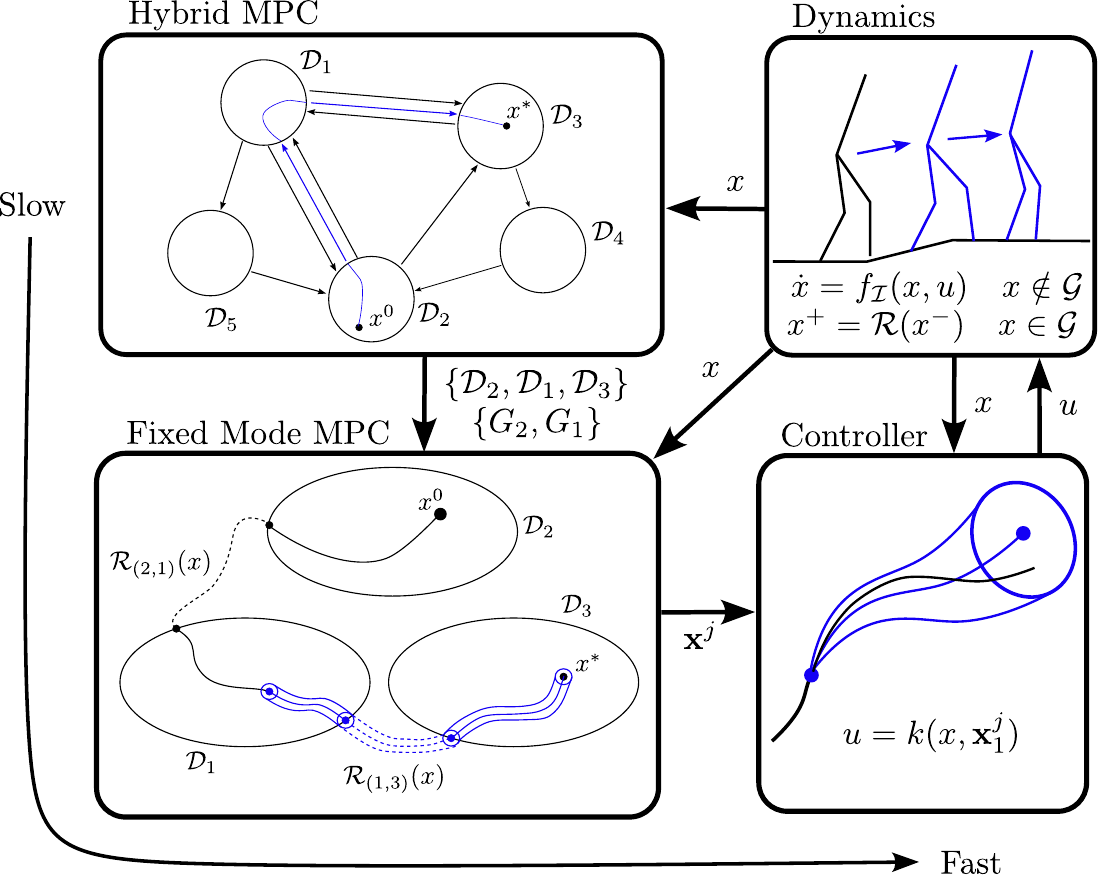}
    \caption{The hybrid MPC layer determines a feasible path at a slow rate, and passes these modes to the fixed mode MPC. This MPC computes a robustly stabilizing solution at a higher rate. The trajectory and feed forward input is passed to the low level controller where it is tracked. We use properties of the low level controller to determine the size of the tube. The resulting control action is applied to the dynamics.}
    \vspace{-8mm}
    \label{fig:hero}
\end{figure}
Computing MPC solutions for hybrid systems has been studied extensively in the field of legged robotics. Planning with the hybrid dynamics can be formulated as a mixed integer problem \cite{aceituno-cabezas_simultaneous_2017}, but this is generally too slow for real time robustness. There has also been work examining Contact Implicit MPC (CIMPC) in which the hybrid problem is solved with gradient based methods in a single optimization \cite{kong_ilqr_2021, kurtz_inverse_2023}. In general, they are slower than solving the fixed mode problem, and sometimes suffer from non-physical approximations of contact (such as force at a distance) \cite{wensing_optimization-based_2022}. By using a layered approach we can circumvent poor dynamics approximations and recover the computation speed of the fixed mode MPC. Other work has considered the problem of adjusting the switching times given a fixed mode sequence \cite{xu_optimal_2004, farshidian_sequential_2017}. A multi-layer approach was used in \cite{olkin_bilevel_2024}, and although the dynamics domains could be chosen online, the contact surfaces (guards) were fixed. A geometric approach can be used, but this smooths the reset map \cite{westenbroek_smooth_2021}.

To construct a robust MPC, we leverage the idea of a tracking error tube.
In classical tube MPC, a finite-sized tube about the trajectory is certified, and the MPC plans with knowledge of this tube to provide robustness \cite{mayne_robust_2005, singh_robust_2017}. In this scheme, the tube provides all the robustness for the system and the MPC only provides optimality and nominal stability. We take a different approach since for many robotic systems this tube cannot be certified and optimality is a secondary objective. We consider inherent robustness of MPC \cite{rawlings_model_2017} with disturbances quantified through tubes.

This paper presents a new layered MPC framework for nonlinear hybrid systems, leading to robust stabilization. The main contributions of this paper are: (a) development of a layered MPC architecture designed for hybrid systems with modifications for robustness. Stability and robustness of these constructions is analyzed. (b) Tubes with sizes that tend to infinity are quantified, including the transient effects of low level controllers. MPC computation speed is linked to the robustness of the system. (c) Simulations demonstrate the efficacy of this approach on a biped traversing stairs and a controlled bouncing ball, both subject to external forces.

\section{Preliminaries}
\subsection{Dynamical Systems}
In this paper we will consider a hybrid dynamical system:
\begin{definition}[Hybrid System]
\label{def:hybrid_sys}
$\calH = (\calJ, \Gamma, \calD, \calU, \calF, \calG, \calR)$ is a hybrid dynamical system with the following definitions:
\begin{enumerate}
    \item $\calJ \subset \bbN$ is a finite set of discrete modes. This makes up the nodes of a graph.
    \item $\Gamma \subset \calJ \times \calJ$ is a set of transitions from one mode to another that form a directed graph structure over $\calJ$.
    \item $\calD := \{\calD_{\calI} \subset \bbR^n \}_{\calI \in \calJ}$ is the collection of domains corresponding to the modes given by $\calJ$. Each domain is a connected subset of $\bbR^n$.
    \item $\calU := \{\calU_{\calI} \subset \bbR^m \}_{\calI \in \calJ}$ is the collection of compact sets of allowable control inputs.
    \item $\calF := \{f_{\calI} : \calD_\calI \times \calU_\calI\ \rightarrow \bbR^n\}_{\calI \in \calJ}$ is the collection of locally Lipschitz continuous maps defining continuous time controlled dynamical systems.
    \item $\calG := \{\calG_{(\calI, \calN)} \subset \calD_\calI\}_{(\calI, \calN) \in \Gamma}$ is the collection of guards which are each proper subsets of $\calD_\calI$. Specifically, each guard is a  $c_{\calG_{\calI, \calN}}$ level set of a continuous function $h_{\calG_{\calI, \calN}}(x)$: $\calG_{\calI, \calN} = \set{x}{h_{\calG_{\calI, \calN}}(x) = c_{\calG_{\calI, \calN}}}$. Each guard is associated with a transition in $\Gamma$.
    \item $\calR := \{\calR_{G} : \calG \rightarrow \calD\}_{G \in \calG}$ is the collection of continuous reset maps associated with each guard. There is exactly one reset map for each guard.
\end{enumerate}
\end{definition}
\begin{assumption}
\label{assump:reset_not_on_guard}
    The image of the reset maps, $\text{Im}(\calR_G)$, satisfies $\text{Im}(\calR_G) \notin \calG$.
\end{assumption}
Let Assumption \ref{assump:reset_not_on_guard} be true throughout the paper. This tells us that the reset map does not map onto any guard.

The continuous time dynamics for each domain $\calI \in \calJ$ are of the form
\begin{equation}
    \label{eq:continuous_sys}
    \dot{x} = f_\calI(x, u) + w(t)
\end{equation}
where $x \in \calD_\calI$ is the state, $u \in \calU_\calI$ in the input, and $w \in \calW \subset \bbR^n$ is a disturbance that lies within a set $\calW$. The disturbance is assumed to satisfy Assumption \ref{assump:bounded_disturbance}.
\begin{assumption}[Disturbance Set]
\label{assump:bounded_disturbance}
    The disturbance, $w(t)$ lies in some set $\calW$ where $0 \in \calW$. Further $\calW$ is bounded with $\sup_{w \in \calW}\norm{w} = \eta < \infty$.
\end{assumption}

We define the set of finite parameterizations (e.g. splines knots) of dimension $\hat{m}$ for controllers in $\calU_\calI$ as $\hat{\calU}_\calI \subset \bbR^{\hat{m}}$ with $\hat{\calU} := \{\hat{\calU}_\calI\}_{\calI \in \calJ}$. In the following we use $u \in \hat{\calU}_\calI$ to implicitly define the continuous time controller.

Since we will use MPC, we need to consider the discretization of the system given in Eq. \eqref{eq:continuous_sys}. The discrete time systems for each domain are given by
\begin{equation}
    \label{eq:discrete_sys}
    \bar{x}^{k+1} = \bar{f}_\calI(\bar{x}^k, \bar{u}^k)
\end{equation}
where $\bar{x} \in \calD_\calI$ is the state, and $\bar{u} \in \hat{\calU}_\calI$ is the input, and $k$ is the time step. This system is associated with a $\delta t_d$ that defines the integration length such that
\begin{equation}
    \label{eq:discrete_int}
    \bar{f}_\calI(\bar{x}^k, \bar{u}^k) = \int_{0}^{\delta t_d}f_\calI(x(t), \bar{u}^k(x(t), t)) dt + \bar{x}^k.
\end{equation}

\subsection{Lyapunov Methods}
To establish stability and robustness conditions, we will leverage Lyapunov methods. We briefly review both discrete and continuous Lyapunov functions. Throughout we will use $\calK$, $\calK_\infty$, and $\calK \calL$ to denote class $\calK$, $\calK_\infty$, and $\calK \calL$ functions \cite{rawlings_model_2017}.

\begin{definition} [Discrete Time Lyapunov Function]
    Consider the discrete time system $x^{k+1} = g(x^k)$. A Lyapunov function, $\calV_g(x)$, satisfies:
    \begin{align}
        \lambda_1(\norm{x^k}) \leq \calV_g(x^k) &\leq \lambda_2(\norm{x^k}) \\
        \calV_g(x^{k+1}) - \calV_g(x^{k}) &\leq -\lambda_3(\norm{x^k})
    \end{align}
    Where $\lambda_i \in \calK, \; i \in \{1,2,3\}$.
\end{definition}

If there exists a $\calV_g(x)$ then $g(x)$ is asymptotically stable to the origin and stable in the sense of Lyapunov \cite{rawlings_model_2017}.

We will use continuous time Lyapunov theory to provide tracking errors in the face of disturbances. Therefore, we will introduce the notion of Exponential Input To State Stability (E-ISS) and E-ISS Lyapunov functions. 
\begin{definition} [E-ISS]
    Consider a continuous time system, $\dot{x} = s(x) + w(t)$. Let Assumption \ref{assump:bounded_disturbance} hold. The system is E-ISS if there exists $\gamma \in \calK$, and $a \in \bbR_{>0}$ such that for every $t \geq t_0 \geq 0$ we have
    \begin{equation*}
        \norm{x(t_0)} \leq a \Rightarrow \norm{x(t)} \leq Me^{\lambda(t - t_0)}\norm{x(t_0)} + \gamma(\eta)
    \end{equation*}
\end{definition}
If a system is E-ISS, it converges to a ball about the origin and renders that ball positive invariant.

\begin{definition} [Continuous Time E-ISS Lyapunov Function]
\label{def:iss_lyap}
    Consider a continuous time system, $\dot{x} = s(x) + w(t)$. Let Assumption \ref{assump:bounded_disturbance} hold. An E-ISS Lyapunov function, $\calV_s(x)$, satisfies:
    \begin{align}
        k_1 \norm{x} \leq \calV_s(x) &\leq k_2 \norm{x} \\
        \dot{\calV_s}(x) &\leq -k_3\norm{x} + \sigma(\eta)
    \end{align}
where $\sigma \in \calK$, $k_i > 0, \; i \in \{1,2,3\}$. If there exists a E-ISS Lyapunov function, the corresponding system is E-ISS \cite{kolathaya_input_2018}.
\end{definition}
\subsection{MPC Stability and Robustness}
Throughout this paper we will only consider stability and robustness of MPC as applied to discrete time systems. Both the hybrid MPC and the fixed mode MPC cost functions are assumed to have the following form
\begin{equation}
\label{eq:cost_function_form}
    J(\bx, \bu) = \sum_{k = 0}^{N-1} L(\bx_k, \bu_k) + \Phi(\bx_N)
\end{equation}
where $L(\cdot, \cdot)$ is the stagewise cost function, and $\Phi(\cdot)$ is the terminal cost, and $\bx \in \bbR^{n\times N}$, $\bu \in \bbR^{\hat{m}\times N}$ denote state and input trajectories. At each time step in the trajectory we denote the input and states as $\bu_k \in \bbR^{\hat{m}}$, $\bx_k \in \bbR^n$ .

Without loss of generality we define the target equilibrium as $x^* = 0$ and define $\calD_F$ as the domain with $x^*$. We make the following common assumptions \cite{rawlings_model_2017}.

\begin{assumption} [Continuity of Cost]
    \label{assump:cont_cost}
    The functions $L : \bbR^n \times \bbR^{\hat{m}} \rightarrow \bbR$, and $\Phi : \bbR^n \rightarrow \bbR$ are continuous on their entire domain, $L(0,0) = 0$, and $\Phi(0) = 0$.
\end{assumption}
Define $\levset{c}{V} := \set{x}{V(x) \leq c}$ for any function $V$.
\begin{assumption} [Properties of Constraint Sets]
    \label{assump:constraint_sets}
    For all $\calI \in \calJ$, $\calU_\calI$ is compact and contains the origin. The terminal constraint, $\calE \subset \calD_F$, is given by $\calE := \levset{c_f}{\Phi} = \set{x}{\Phi(x) \leq c_f}$ for some $c_f > 0$.
\end{assumption}
\begin{assumption} [Terminal Stability]
    \label{assump:terminal_stability}
    There exists a terminal control law $\kappa_f : \calE \rightarrow \hat{\calU}_{\calD_F}$ such that for all $x \in \calE$:
    \begin{align*}
        \bar{f}(x, \kappa_f(x)) &\in \calE \\
        \Phi(\bar{f}(x, \kappa_f(x)) &\leq \Phi(x) - L(x, \kappa_f(x))
    \end{align*}
\end{assumption}
\begin{assumption} [Stage Cost Bound]
    \label{assump:stage_cost_bound}
    There exists a function $\alpha_L \in \calK_\infty$ such that $L(x, u) \geq \alpha_L(\norm{(x, u)})$.
\end{assumption}

\section{Control Architecture and Properties}
\textbf{Hybrid MPC}. This layer is responsible for determining the sequence of modes to be utilized by the control architecture. Ultimately, this amounts to determining the path used to traverse the graph structure generated by $\calJ$ and $\Gamma$. In general, this cannot be solved with normal graph traversal algorithms because we need a dynamically feasible trajectory that respects dynamics and input constraints.

The hybrid MPC problem at time step $j$ is given as
\begin{mini!}[1]
{\bx, \; \bu, \; \calI}{J^H(\bx, \bu)}
{\label{eq:hl_mpc}}{}
\addConstraint{\bx_0}{= \bar{x}^j \label{eq:ic_constraints_hl_mpc}}
\addConstraint{\bx_{k+1}}{= \bar{f}_{\calI(k)}(\bx_k, \bu_k) \quad \label{eq:dyn_constraints_hl_mpc}}{\bx_k \notin \calG}
\addConstraint{\bx_{k+1}}{= \calR_{\calG_{(\calI(k), \calI(k+1))}}(\bx_k)}{x_k \in \calG_{(\calI(k), \calI(k+1))}}
\addConstraint{\calI(k)}{\in \calJ}{\forall k \in [0, \: N]}
\addConstraint{\calI(N)}{= \calD_F}{}
\addConstraint{\bu_k}{\in \hat{\calU}_{\calI(k)} \quad \label{eq:input_constraints_hl_mpc}}{\forall k \in [0, \: N-1]}
\addConstraint{\bx_N}{\in \calE \label{eq:terminal_constraints_hl_mpc}}
\end{mini!}
where $J^H$ is the cost function, $\bar{x}^j$ gives the initial state of the system, and $\calI(k)$ gives the domains. The resulting trajectory and input given by the solution are $\bx^j$ and $\bu^j$. Denote the time over the MPC horizon by $t_N$. Note that there are at most $r_N = N/2$ reset maps (from Assumption \ref{assump:reset_not_on_guard}). The traversal over the hybrid graph is guaranteed to respect the directed graph because all the transitions are determined via the action of a reset map. The dynamics constraint enforces the continuous dynamics for the mode, and in the case that the state is on a guard, the associated reset map is applied. The integration \eqref{eq:discrete_int} terminates early if the state hits the guard, thus preventing the state from ``skipping" over the guard.

\textbf{Fixed mode MPC}. This layer takes the sequence of modes and guards to determine a trajectory and control action. By fixing the modes and guards, the problem becomes a time-varying and state-constrained problem. One of the benefits of breaking the solution into layers is that this MPC problem is in general much quicker to solve than the high level MPC problem because no discrete decisions need to be made. Define the dynamics as
\begin{equation}
    \label{eq:dynamics_fixed_modes}
    g_{\calI, G}(\bar{x}, \bar{u}, k) = \begin{cases}
        \bar{f}_{\calI(k)}(\bar{x}, \bar{u}) &\text{if} \; \calI(k) = \calI(k+1) \\
        \calR_{G(k)}(\bar{x}) &\text{if} \; \calI(k) \neq \calI(k+1)
    \end{cases}
\end{equation}
where $\calI(k)$ gives a time dependent mode sequence and $G(k)$ gives the active guard at time $k$. Note that the integration to get the discrete time system does not have a guard termination condition. There is no integration or time passage associated with the dynamics in the second case. Given fixed guards, the states needs to avoid the other guards and contact the desired guard. We denote these state constraints as follows. Let $P_\calI(k) := \set{p}{p_{(\calI(k), \calM)} \in \calG \; \forall (\calI(k), \calM) \in \Gamma}$ denote the set of all possible guards from the domain $\calI(k)$. Then define the constraint set as:
\begin{equation*}
    \calX_{\calI, G}(k) = \begin{cases}
        \set{x}{h_{G(k)}(x) = c_{G(k)}} \; \text{if} \; \calI(k) \neq \calI(k + 1) \\
        \set{x}{h_{p}(x) < c_{p} \; \forall p \in P_\calI(k)} \; \text{otherwise}
    \end{cases} 
\end{equation*}
Without loss of generality, we used $h_{p}(x) < c_{p}$ to indicate that the guard condition is not met, although in practice it is possible that the condition is $h_{p}(x) > c_{p}$. 

This MPC problem at time step $j$ is given in \eqref{eq:ll_mpc}.
\begin{mini!}[1]
{\bx, \; \bu}{J^L(\bx, \bu)}
{\label{eq:ll_mpc}}{}
\addConstraint{\bx_0}{= \bar{x}^j \label{eq:C2_ll_mpc}}
\addConstraint{\bx_{k+1}}{= g_{\calI, G}(\bx_k, \bu_k, k) \quad \label{eq:dyn_ll_mpc}}{\forall k \in [0, \: N]}
\addConstraint{\bu_k}{\in \hat{\calU}_{\calI(k)} \quad \label{eq:C3_ll_mpc}}{\forall k \in [0, \: N-1]}
\addConstraint{\bx_k}{\in \calX_{\calI, G}(k) \quad \label{eq:state_constraints_ll_mpc}}{\forall k \in [1, \: N]}
\addConstraint{\bx_N}{\in \calE \label{eq:C5_ll_mpc}}
\end{mini!}
\begin{remark}
    Since we are only considering the discrete time system, we make the standing assumption throughout the paper that if the discrete state satisfies an inequality constraint, then the continuous time system satisfies that constraint between discrete nodes.
\end{remark}
When this MPC problem is run in closed loop, at the next time instant, the previous time node is removed from its domain and is replaced with a new node in the final domain. This shifts the domain schedule so that after a domain is exited, all the constraints related to that domain are removed and after $N$ time steps, we get $\calI(k) = \calD_F \; \forall k$.

\textbf{Low level controller}. This controller is designed to track the trajectory generated by the low level MPC. Although optional, in practice this layer can be very helpful, especially for unstable systems.

\subsection{Stability}
We present a definition for stability of hybrid systems.
\begin{definition} [Hybrid Stability]
\label{def:hybrid_stability}
    The discretized version of the system, denoted as $\bar{\calH}$, is \emph{Hybrid Stable} in a region $\calF \subset \bbR^n$ if it satisfies two conditions:
    \begin{equation}
    \label{eq:hybrid_asmpy_stability}
        \bar{x} \in \calF \Rightarrow \lim_{j \rightarrow \infty} \norm{\bar{x}^j} = 0.
    \end{equation}
    and secondly, there exists some $\calA \subset \calF$ in which the system is locally stable in the sense of Lyapunov:
    \begin{equation}
    \label{eq:hybrid_lyap_stability}
        \forall \epsilon > 0, \; \exists \delta > 0 \quad \text{s.t} \quad \norm{\bar{x}^j} < \delta \Rightarrow \norm{\bar{x}^i} < \epsilon \; \forall i \geq j.
    \end{equation}
\end{definition}
\begin{assumption} [Properties of the Equilibrium]
\label{assump:properties_of_eq}
The terminal constraint set does not include a guard: there exists some $\epsilon \geq c_f$ such that $\levset{\epsilon}{\Phi} \cap \calG = \emptyset$. Further, $\bar{f}(0,0) = 0$.
\end{assumption}

Here we introduce the suboptimal control strategy used for the high level MPC problem. 
The input policy at the $j$th time step is $\bu^j = Y(\bar{x}^j, \bu^{j-1}) := (\bu^{j-1}_1, ..., \bu^{j-1}_{N-1}, \kappa_f(\bx^j_N))$. The initial policy is any feasible input sequence, which is in general non-unique. We denote the set of inputs that respect the constraints in \eqref{eq:hl_mpc} as $U(\bar{x})$ where $\bar{x}$ gives the current initial state and by extension the current domain. $\bu^j$ is a function of $\bu^{j-1}$ and, therefore, we will investigate the extended state $\bar{z}^j := (\bar{x}^j, \bu^j)$, which satisfies a difference inclusion. For more details see \cite{allan_inherent_2017}. We denote this difference inclusion as:
\begin{multline}
\label{eq:extended_diff_inclusion}
    \bar{z}^{j+1} \in H(z^j) := \set{(\bar{x}^{j+1}, \bu)}{ \\ \bar{x}^{j+1} = \bar{f}_{\calI(\bar{x}^j)}(\bar{x}^j, \hat{\bu}_1) \; \text{if} \; \bar{x}^j \notin \calG, \\ \bar{x}^{j+1} = R_{G(\bar{x}^j)}(\bar{x}^j) \; \text{if} \; \bar{x}^j \in \calG, \: \bu = Y(\bar{x}^j, \hat{\bu}), \: \hat{\bu} \in U(\bar{x}^j)}
\end{multline}
where $\calI(\bar{x}^j)$ gives the domain of the current state and $G(\bar{x}^j)$ gives the guard at the current state. Given an extended state, we denote the resulting state trajectory as $\bx(z)$ (i.e. generated by application of the inputs to the dynamics). To simplify notation we write $J^H(\bar{z}^j) := J^H(\bx(\bar{z}^j), \bu^j)$ (where $\bu^j$ comes from $\bar{z}^j$)\footnote{Note that $J^H$ is not continuous as a function of $\bar{z}^j$}.
\begin{lemma}
\label{lemma:bounded_cost}
    Under Assumptions \ref{assump:cont_cost}  and \ref{assump:constraint_sets} the cost $J^H$ is bounded for all $\bar{z}$ given by the difference inclusion if the initial state in $\bar{z}$ satisfies $\bar{x}^0 \in \calF \subset \bbR^n$ for any compact $\calF$.
\end{lemma}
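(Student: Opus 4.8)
The plan is to produce a finite bound $\bar J = \bar J(\calF)$ with $J^H(\bar z) \le \bar J$ for every admissible extended state $\bar z = (\bar x, \bu)$ appearing in the difference inclusion \eqref{eq:extended_diff_inclusion} whose state component satisfies $\bar x \in \calF$; if one further wants the bound to persist along a whole run of the inclusion, a short monotonicity argument (sketched at the end) does the rest. The core of the argument is to show that the roll-out $\bx(\bar z)$, together with the applied inputs, stays inside a compact set depending only on $\calF$, after which continuity of the costs finishes it.

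First I would fix $\bar x \in \calF$ and an admissible input $\bu \in U(\bar x)$. By Assumption \ref{assump:constraint_sets} each $\calU_\calI$ is compact (and we take the finite parameterization sets $\hat\calU_\calI$ compact as well), and $\calJ$ is finite, so the pair $(\bar x, \bu)$ lies in a fixed bounded set $\calB_0$. I then propagate this through the $N$ steps of the horizon. Each step of the roll-out is either (a) an application of $\bar f_\calI$ --- which by \eqref{eq:discrete_int} evaluates the flow of the locally Lipschitz field $f_\calI$ at some time $\tau \le \delta t_d$ with a fixed input; since that flow is jointly continuous in time and initial condition (and exists on $[0,\delta t_d]$, as it must for $\bar f_\calI$ to be defined), its restriction to a compact set of initial conditions and inputs has compact image, so $\bar f_\calI$ maps bounded sets into bounded sets --- or (b) an application of a reset map $\calR_G$, which is continuous and hence also maps bounded sets into bounded sets. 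Because $\calJ$ and $\Gamma$ are finite, at each step the reachable set is a finite union of bounded sets, hence bounded; inductively every $(\bx_k,\bu_k)$ and the terminal state $\bx_N$ lie in a compact set determined only by $\calF$. Continuity of $L$ and $\Phi$ (Assumption \ref{assump:cont_cost}) then bounds each of the $N+1$ terms of the cost \eqref{eq:cost_function_form}, yielding $\bar J$.

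If instead the statement is read along an entire trajectory of \eqref{eq:extended_diff_inclusion}, I would add the standard warm-start estimate: with $\bu^{j+1} = Y(\bar x^{j+1},\bu^j)$ the first $N-1$ stage terms of $J^H(\bar z^{j+1})$ coincide with the last $N-1$ stage terms of $J^H(\bar z^j)$ (a shift of the time index by one, valid whether step $0$ is a flow step or a reset step, since a reset also consumes one index), while the terminal-controller conditions of Assumption \ref{assump:terminal_stability} bound the remaining stage term plus terminal cost; since $L \ge 0$ (Assumption \ref{assump:stage_cost_bound}) this gives $J^H(\bar z^{j+1}) \le J^H(\bar z^j)$, so the bound from the previous paragraph at $j=0$ holds for all $j$.

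The main obstacle is exactly the non-continuity of $J^H$ in $\bar z$ flagged in the footnote: the early termination of the integration \eqref{eq:discrete_int} at a guard, and the branching between a flow step and a reset step, make $\bar z \mapsto \bx(\bar z)$ discontinuous, so ``a continuous function on a compact set is bounded'' cannot be applied to $J^H$ directly. The remedy is to bound the \emph{flow} of $f_\calI$ on compact sets --- which needs only continuity of the flow in its initial condition, not continuity of the discretized map --- and to use finiteness of $\calJ$ and $\Gamma$ so that the per-step reachable set stays a finite union of bounded sets. A minor point to state carefully is the standing requirement that the flow of $f_\calI$ not escape in time $\delta t_d$, which is implicit in $\bar f_\calI$ being well defined.
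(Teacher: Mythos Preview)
Your argument is correct and follows essentially the same route as the paper: bound the roll-out states in a compact set by exploiting Lipschitz continuity of each $f_\calI$ on compact sets together with continuity of the reset maps, then invoke continuity of $L$ and $\Phi$ on that compact set. The only cosmetic differences are that the paper writes out an explicit Gronwall--Bellman bound per domain (and iterates across at most $r_N$ resets) where you phrase it as ``continuous flow maps compact sets to compact sets,'' and your closing monotonicity paragraph is not part of the paper's proof of the lemma but rather anticipates the cost-decrease step used later in Theorem~\ref{thm:hl_mpc_stability}.
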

\begin{proof}
    Let $\tilde{L}_u$ and $\tilde{L}_x$ denote the largest input and state Lipschitz constants for all $f_\calI$. We now bound the state over the MPC horizon $t_N$: from Gronwall Bellman \cite{khalil_nonlinear_2002} we get $\norm{\bar{x}^0 - \bx^0_k} \leq \tilde{L}_u M_u t_N e^{\tilde{L}_x t_N} \; \forall k \leq N$ where $M_u := \max_{\calI \in \calJ}\{\sup_{u \in \calU_\calI} \norm{u}\} < \infty$. Thus, there is some constant $c_1$ such that $\norm{\bx_k^0} \leq c_1 \; \forall k \leq N$. Therefore, the state at the first guard impact, $\bx^0_{k_{r1}}$, is in a compact set and thus $\calR(\bx^0_{k_{r1}})$ is bounded. This argument can be repeated for each domain given that the result of the reset map is bounded and there are at most $r_N$ resets. We denote the largest upper bound as $c_2$. In turn we have that there is a compact set $B := \set{x}{\norm{x} \leq c_2}$ such that $\bx(\bar{z}) \subset B^N$. From Assumption \ref{assump:cont_cost} we have that $J^H(\bx, \bu)$ is continuous on $\bbR^{n \times N} \times \bbR^{\hat{m} \times N}$ and thus, bounded on the compact set $B^N \times \hat{\calU}^N$. Therefore, $J^H(z)$ is bounded for any $\bar{x}^0 \in \calF$.
\end{proof}
Let $\Pi_x(\levset{\rho}{J^{(\cdot)}}) := \set{x}{J^{(\cdot)}(z) \leq \rho, \; z = (x, \bu)}$ denote the projection of the sub-level set onto the state variables.
\begin{theorem} [High Level Hybrid Stability]
\label{thm:hl_mpc_stability}
    Let $\calF \subset \bbR^n$ denote a compact set of states where the problem \eqref{eq:hl_mpc} is feasible. Then if Assumptions \ref{assump:cont_cost} - \ref{assump:properties_of_eq} hold, there exists a controller that stabilizes $\bar{\calH}$ in the sense of Def. \ref{def:hybrid_stability} for all $\bar{x}^0 \in \calF$.
\end{theorem}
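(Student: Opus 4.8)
The plan is to exhibit the suboptimal warm-start controller $Y$ itself as the stabilizer and to show that $J^H$ serves as a Lyapunov function for the extended state $\bar{z}^j=(\bar{x}^j,\bu^j)$ along the difference inclusion \eqref{eq:extended_diff_inclusion}. The first step is feasibility of the shifted candidate produced by $Y$: its first $N-1$ inputs inherit feasibility from $\bar{z}^j$ (the dynamics, reset, and input constraints in \eqref{eq:hl_mpc} are merely re-indexed along the shifted domain schedule, since after the first transition the remaining modes and guards are unchanged), while Assumptions \ref{assump:constraint_sets} and \ref{assump:terminal_stability} place $\kappa_f$ of the terminal node in $\hat{\calU}_{\calD_F}$ and the new terminal node in $\calE$; Assumption \ref{assump:properties_of_eq} guarantees that appending this node neither leaves $\calD_F$ nor meets a guard, so the last step is an ordinary dynamics step and (by the standing remark) no guard is crossed during its integration.

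Second, I would derive the descent inequality. Evaluating $J^H$ on the shifted trajectory $\bx^{j+1}_k=\bx^j_{k+1}$, the stage-cost sum telescopes and the terminal inequality of Assumption \ref{assump:terminal_stability} absorbs the extra term, giving
\begin{equation*}
    J^H(\bar{z}^{j+1}) \leq J^H(\bar{z}^j) - L(\bar{x}^j, \bu^j_0) \leq J^H(\bar{z}^j) - \alpha_L(\norm{\bar{x}^j}),
\end{equation*}
the last bound by Assumption \ref{assump:stage_cost_bound}; the resets occurring along the horizon are harmless because the cost is a sum over nodes and only the terminal node enters the cancellation. Combining this with $\alpha_L(\norm{\bar{x}})\leq J^H(\bar{z})$ (from Assumptions \ref{assump:terminal_stability} and \ref{assump:stage_cost_bound}, noting $\Phi\geq 0$ on $\calE$ by iterating Assumption \ref{assump:terminal_stability}) and the boundedness of $J^H$ from Lemma \ref{lemma:bounded_cost}, monotonicity yields $\sum_{j\geq 0}\alpha_L(\norm{\bar{x}^j})\leq J^H(\bar{z}^0)<\infty$, hence $\norm{\bar{x}^j}\to 0$ for every $\bar{x}^0\in\calF$, which is \eqref{eq:hybrid_asmpy_stability}.

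Third, for the Lyapunov-stability clause \eqref{eq:hybrid_lyap_stability} I would set $\calA=\Pi_x(\levset{\rho}{J^H})$ for $\rho$ small enough that this set lies within $\levset{\epsilon}{\Phi}$, where by Assumption \ref{assump:properties_of_eq} no guard is active and the closed loop reduces to an ordinary (non-hybrid) suboptimal MPC. The set $\levset{\rho}{J^H}$ is positively invariant by monotonicity and satisfies $\norm{\bar{x}^i}\leq\alpha_L^{-1}(\rho)$ on it; to translate ``$\norm{\bar{x}^j}$ small'' into ``$J^H(\bar{z}^j)$ small'' one uses that near the origin the trajectory obtained by iterating $\kappa_f$ is feasible with cost telescoping to at most $\Phi(\bar{x}^j)\leq\lambda_2(\norm{\bar{x}^j})$ for some $\lambda_2\in\calK$. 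Composing these $\calK$-bounds with monotonicity of $J^H$ then produces the $\epsilon$--$\delta$ statement.

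The main obstacle is this last translation: $J^H$ is discontinuous in the extended state (as noted after \eqref{eq:extended_diff_inclusion}), so one cannot directly infer that a small $\norm{\bar{x}^j}$ forces a small $J^H(\bar{z}^j)$ for an arbitrary inherited warm start. This is precisely the difficulty treated in the suboptimal-MPC analysis of \cite{allan_inherent_2017}, whose device --- restricting the admissible warm starts near the origin, equivalently working with positively invariant sublevel sets of $J^H$ and their state projections --- I would adapt here; the only genuinely new bookkeeping is verifying, via Assumption \ref{assump:properties_of_eq}, that the relevant neighborhood of the equilibrium is guard-free, so that these sublevel sets are invariant under the reset-free dynamics there and the hybrid jumps never interfere with local stability.
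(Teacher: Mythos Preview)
Your proposal is correct and follows essentially the same architecture as the paper's proof: exhibit the warm-start policy $Y$, establish the cost-descent inequality for $J^H$ along the extended difference inclusion, and then isolate a guard-free neighborhood of the origin via Assumption~\ref{assump:properties_of_eq} where the standard suboptimal-MPC results of \cite{allan_inherent_2017} apply for local Lyapunov stability. The one noteworthy variation is in how you obtain \eqref{eq:hybrid_asmpy_stability}: you telescope the descent to get $\sum_{j\geq 0}\alpha_L(\norm{\bar{x}^j})\leq J^H(\bar{z}^0)<\infty$ and conclude $\norm{\bar{x}^j}\to 0$ directly, whereas the paper instead argues that outside any sublevel set $\levset{c}{J^H}$ the decrement $\alpha_3(\norm{z^j})$ is uniformly bounded below by some $c_3>0$, forcing entry into $\levset{c}{J^H}$ in finitely many steps (with Lemma~\ref{lemma:bounded_cost} bounding the initial cost), and then invokes asymptotic stability inside that set. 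Both routes are valid; your summability argument is slightly more elementary and avoids needing the decrease in terms of the full extended-state norm, while the paper's finite-time argument makes the two-clause structure of Definition~\ref{def:hybrid_stability} more explicit by first driving the state into the set $\calA=\Pi_x(\levset{c}{J^H})$ where Lyapunov stability is then certified.
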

\begin{proof}
    The proof takes two parts: first, we show that the cost converges to some value $c > 0$ in finite time, then we show that within this sub-level set the cost function acts as a Lyapunov function for the system.
    
    \textit{Finite time convergence}. From Assumptions \ref{assump:terminal_stability}, \ref{assump:stage_cost_bound}, Eq. \eqref{eq:extended_diff_inclusion}, and the proof of Theorem 14 in \cite{allan_inherent_2017} we have that 
    $J^H(z^{j+1}) \leq J^H(z^{j}) - \alpha_3(\norm{z^j})$ for some $\alpha_3 \in \calK$. Note that $\alpha_3(\norm{z}) \geq c_3$ for some $c_3 > 0$ for all $z \notin \levset{c}{J^H}$. Therefore $J^H(z^{j+1}) \leq J^H(z^j) - c_3$. Using this relation recursively starting from $j = 0$ yields $J^H(z^{j+1}) \leq J^H(z^0) - (j+1)c_3$. By Lemma \ref{lemma:bounded_cost} $J^H(z^0) = c_4 < \infty$. Thus $J^H(z^{j+1}) \leq c_4 - jc_3$ for any $j$ while $z^j \notin \levset{c}{J^H}$. One can see that there exists a $j = N_c$ with $0 < N_c < \infty$ such that $c_4 - N_c c_3 \leq c$ and therefore after $N_c$ steps $z^{N_c} \in \levset{c}{J^H}$.

    \textit{Lyapunov Stability}. Let $c > 0$ be any number such that $\Pi_x(\levset{c}{J^H}) \subset \levset{\epsilon}{\Phi}$ where $\epsilon$ is defined in Assumption \ref{assump:properties_of_eq}. This will always exist under the conditions in Assumption \ref{assump:cont_cost}. Within $\Pi_x(\levset{c}{J^H})$, the dynamics are continuous, there are no guards, and the set is simply connected; therefore, we can apply the results from \cite{allan_inherent_2017} Theorem 14 which yield that the system in this region is asymptotically stable in the sense of Lyapunov.

    Therefore for all $\bar{x}^0 \in \calF$ the system is asymptotically stable and in $\Pi_x(\levset{c}{J^H})$ the system is also stable in the sense of Lyapunov.
\end{proof}

\begin{proposition}
    Let Assumptions \ref{assump:cont_cost} - \ref{assump:properties_of_eq} hold. Let the set $\calF \subset \bbR^n$ denote a compact set of states where the problem \eqref{eq:hl_mpc} is feasible. For all $\bar{x}^0 \in \calF$, given $\calI(k), G(k)$ from the (potentially suboptimal) solution of \eqref{eq:hl_mpc}, the low level MPC given by \eqref{eq:ll_mpc} is feasible and there exists a stabilizing controller in the sense of Def. \ref{def:hybrid_stability}. 
\end{proposition}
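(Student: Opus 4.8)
The plan is to follow the template of the proof of Theorem \ref{thm:hl_mpc_stability}: first show \eqref{eq:ll_mpc} admits a feasible point and remains feasible in closed loop, then run a suboptimal warm-start controller whose cost is bounded, decreases at each step, and hence enters a sublevel set that projects into $\calD_F$ away from every guard, where the argument reduces to a standard continuous-dynamics MPC.

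\emph{Feasibility.} First I would produce an explicit feasible point of \eqref{eq:ll_mpc} at time $j$, namely the (sub)optimal high level trajectory $(\bx^j,\bu^j)$ paired with the mode and guard schedules $\calI(k), G(k)$ it returned. Checking \eqref{eq:ll_mpc} against \eqref{eq:hl_mpc} constraint by constraint: \eqref{eq:C2_ll_mpc}, \eqref{eq:C3_ll_mpc} and \eqref{eq:C5_ll_mpc} coincide with \eqref{eq:ic_constraints_hl_mpc}, \eqref{eq:input_constraints_hl_mpc} and \eqref{eq:terminal_constraints_hl_mpc}; the dynamics \eqref{eq:dyn_ll_mpc}--\eqref{eq:dynamics_fixed_modes} agree with \eqref{eq:dyn_constraints_hl_mpc} because a transition ($\calI(k)\neq\calI(k+1)$) occurs exactly at a node lying on $\calG_{(\calI(k),\calI(k+1))}$, where both apply $\calR_{G(k)}$, while at non-transition nodes the high level step is an integration of $f_{\calI(k)}$ that produced $\bx^j_{k+1}$ and the schedule extracted for the low level problem places each guard contact on the corresponding node, so the fixed-mode integration reproduces the same $\bx^j_{k+1}$; finally the state constraints \eqref{eq:state_constraints_ll_mpc} hold since transition nodes lie on $G(k)$ by construction and non-transition nodes avoid all guards (otherwise the high level integration would have terminated there and forced a transition), with the between-node condition guaranteed by the standing remark. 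This proves feasibility of \eqref{eq:ll_mpc}.

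\emph{Recursive feasibility and stability.} Next I would close the loop with the suboptimal policy that shifts the previous low level solution forward, discards the executed node, and appends a terminal node generated by $\kappa_f$; since $\calE\subset\calD_F$, $\calI(N)=\calD_F$, and $\levset{\epsilon}{\Phi}\cap\calG=\emptyset$ with $\epsilon\geq c_f$ (Assumptions \ref{assump:constraint_sets}, \ref{assump:terminal_stability}, \ref{assump:properties_of_eq}), the appended node stays inside $\calE\subset\calD_F$ and off every guard, so this candidate is feasible at step $j+1$ even though the mode schedule is shifting; after at most $N$ steps the schedule satisfies $\calI(k)=\calD_F$ for all $k$ and \eqref{eq:ll_mpc} becomes a standard time-invariant MPC. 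Along this policy the usual telescoping with Assumption \ref{assump:terminal_stability} gives a descent $J^L(\bar{z}^{j+1})\leq J^L(\bar{z}^j)-\alpha_3(\norm{\bar{z}^j})$ for some $\alpha_3\in\calK$ (cf. \cite{allan_inherent_2017}, Theorem 14), and an argument identical to Lemma \ref{lemma:bounded_cost} --- Gronwall--Bellman bounds on each continuous arc, boundedness of the at most $r_N$ reset images, and continuity of $J^L$ on the resulting compact set --- gives $J^L(\bar{z}^0)<\infty$ for all $\bar{x}^0\in\calF$. Hence $J^L$ reaches $\levset{c}{J^L}$ in finite time, and choosing $c$ so that $\Pi_x(\levset{c}{J^L})\subset\levset{\epsilon}{\Phi}\subset\calD_F$ puts us in a region where the dynamics are continuous and guard-free and the horizon is entirely in $\calD_F$; there \cite{allan_inherent_2017}, Theorem 14, applies verbatim, yielding asymptotic stability for all $\bar{x}^0\in\calF$ together with Lyapunov stability on $\Pi_x(\levset{c}{J^L})$, i.e. Hybrid Stability in the sense of Def. \ref{def:hybrid_stability}.

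I expect the main obstacle to be the bookkeeping for the time-varying, reset-punctuated structure of \eqref{eq:ll_mpc} over the first $N$ closed-loop steps: confirming that the shift-and-append warm start stays feasible as domains are dropped from the schedule, that the telescoping descent survives the reset nodes (where there is no time passage and the input is inert, so one must check the stage cost there does not spoil $J^L(\bar{z}^{j+1})\leq J^L(\bar{z}^j)-\alpha_3(\norm{\bar{z}^j})$), and --- most delicately --- that the guard-contact convention of \eqref{eq:ll_mpc} (exact contact achieved by full fixed-mode integration) is consistent with the early-terminated integration used in \eqref{eq:hl_mpc}, so that the high level solution is genuinely admissible for the low level problem. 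Once this transient phase is dispatched, the remainder is a direct invocation of the high level proof.
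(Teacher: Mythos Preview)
Your feasibility argument is the paper's argument, just made explicit: the high-level solution satisfies every constraint of \eqref{eq:ll_mpc} ``by construction,'' and your constraint-by-constraint check spells this out. Where you diverge is in the stability half. The paper's entire proof of that part is one line: ``From Theorem~\ref{thm:hl_mpc_stability} there is a stabilizing controller.'' The point is that the proposition only asserts \emph{existence} of a stabilizing controller, and Theorem~\ref{thm:hl_mpc_stability} already supplies one---the suboptimal high-level policy---whose trajectories are precisely the feasible points you just exhibited for \eqref{eq:ll_mpc}. So once feasibility is established, stability is an immediate corollary of the theorem and nothing further is needed.

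Your route instead re-derives Theorem~\ref{thm:hl_mpc_stability} inside the low-level problem: you set up a separate shift-and-append warm start for \eqref{eq:ll_mpc}, redo the Lemma~\ref{lemma:bounded_cost} bound for $J^L$, and rerun the finite-time-then-Lyapunov argument. This is not wrong, but it forces you into exactly the bookkeeping you flag at the end (time-varying mode schedule, reset nodes in the telescoping, the early-termination versus fixed-mode integration mismatch), none of which the paper has to confront. What your approach would buy is a stronger conclusion---that the low-level MPC \emph{itself}, run in closed loop with its own warm start, is stabilizing---but that is not what the proposition claims. If you only need existence, invoke Theorem~\ref{thm:hl_mpc_stability} and stop.
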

\begin{proof}
    The solution of \eqref{eq:hl_mpc} provides a feasible input and state trajectory that satisfy the constraints by construction. From Theorem \ref{thm:hl_mpc_stability} there is a stabilizing controller.
\end{proof}
 
\subsection{Robustness}
Given disturbances in the system, the actual state will deviate from the planned state by some amount. We will quantify this deviation through the use of a tube.

Define set addition between $A$ and $B$ as $A \oplus B := \set{a + b}{ a \in A, \; b \in B}$. 

\begin{definition} [Tube and Tube Diameter]
\label{def:tube}
    Given a trajectory $\tilde{x}(t)$, define a tube as $\calT_{\tilde{x}}(t) := \calT(t) \oplus \{\tilde{x}(t)\}$ where $\calT(t)$ is a time varying cross section. This cross section is defined as some set with $0 \in \calT(t)$ for all $t$. Further, let $\diam{\calT_{\tilde{x}}(t)}:= \sup_{x \in \calT(t)} \norm{x}$ which satisfies the following conditions: (a) $\diam{\calT_{\tilde{x}}(0)} = 0$, (b) $\diam{\calT_{\tilde{x}}(t)}$ is non-decreasing.
 A tube is valid for a system if $x(\tau) \in \calT_{\tilde{x}}(\tau) \; \forall \tau \in [t_i, t_i + T]$ for some $T > t_i$.
\end{definition}

\begin{proposition} [Tube Existence]
    \label{prop:tube_exists}
    Consider the system in \eqref{eq:continuous_sys} for a single domain and let Assumption \ref{assump:bounded_disturbance} hold. Given a dynamically feasible trajectory (for the undisturbed system), $\tilde{x}(t)$, and associated feed forward inputs, $\tilde{u}(t)$, with initial condition $\tilde{x}(t_0) = x(t_0)$, if the input to the system satisfies $\norm{\tilde{u}(t) - u(t)} \leq \zeta$ for some $0 \leq \zeta < \infty $ then a time varying tube, $\calT_{\tilde{x}}(t)$, always exists satisfying the conditions in Definition \ref{def:tube}. We refer to this tube as the trivial tube.
\end{proposition}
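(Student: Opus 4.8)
The plan is to track the deviation between the true and nominal trajectories with a scalar comparison argument and then package the resulting bound as a time-varying norm ball. First I would define the error signal $e(t) := x(t) - \tilde{x}(t)$, so that $e(t_0) = 0$ by the matched initial condition, and differentiate along \eqref{eq:continuous_sys}: $\dot{e}(t) = f_\calI(x(t), u(t)) - f_\calI(\tilde{x}(t), \tilde{u}(t)) + w(t)$. I would then fix any $T > 0$ small enough that the disturbed solution $x(\cdot)$ exists and stays in $\calD_\calI$ on $[t_0, t_0 + T]$; such a $T$ exists by local existence of solutions, since each $f_\calI$ is locally Lipschitz and $w$ is bounded. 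On this compact time interval both $x(\cdot)$ and $\tilde{x}(\cdot)$ are continuous, hence contained in a compact set $K \subset \calD_\calI$, and together with compactness of $\calU_\calI$ this furnishes finite Lipschitz constants $L_x, L_u$ for $f_\calI$ on a neighborhood of $K \times \calU_\calI$.

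Next, for a.e. $t \in [t_0, t_0 + T]$ I would bound
\[
  \frac{d}{dt}\norm{e(t)} \leq \norm{\dot{e}(t)} \leq L_x \norm{e(t)} + L_u \norm{u(t) - \tilde{u}(t)} + \norm{w(t)} \leq L_x \norm{e(t)} + L_u \zeta + \eta,
\]
using the input-error hypothesis $\norm{\tilde{u}(t) - u(t)} \leq \zeta$ and Assumption \ref{assump:bounded_disturbance}. Applying the comparison lemma / Gronwall--Bellman \cite{khalil_nonlinear_2002} with $\norm{e(t_0)} = 0$ yields the explicit bound $\norm{e(t)} \leq \rho(t)$, where
\[
  \rho(t) := \frac{L_u \zeta + \eta}{L_x}\left(e^{L_x (t - t_0)} - 1\right)
\]
(interpreted as $(L_u \zeta + \eta)(t - t_0)$ when $L_x = 0$).

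Finally I would take the cross section $\calT(t) := \set{v \in \bbR^n}{\norm{v} \leq \rho(t)}$ and set $\calT_{\tilde{x}}(t) := \calT(t) \oplus \{\tilde{x}(t)\}$ as in Definition \ref{def:tube}. Then $0 \in \calT(t)$ for all $t$, and $\diam{\calT_{\tilde{x}}(t)} = \rho(t)$, which satisfies condition (a) since $\rho(t_0) = 0$ and condition (b) since $\rho$ is non-decreasing in $t$ (as $L_u \zeta + \eta \geq 0$). Validity on $[t_0, t_0 + T]$ is then immediate: $x(t) = \tilde{x}(t) + e(t)$ with $\norm{e(t)} \leq \rho(t)$, so $x(t) \in \calT_{\tilde{x}}(t)$ for every $t$ in the interval, which is exactly the trivial tube.

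The main obstacle — and the reason the interval $[t_0, t_0 + T]$ enters the statement at all — is the gap between the \emph{local} Lipschitz property in Definition \ref{def:hybrid_sys} and the \emph{uniform} constant needed in the Gronwall step: one cannot bound the trajectories globally in time a priori, and indeed the tube is explicitly allowed to have diameter tending to infinity. Restricting to a finite interval on which the solution is known to exist removes any circularity, since continuity confines the trajectories to a compact set where a uniform Lipschitz constant is available; the bound $\rho$ may of course grow without bound as $T$ is enlarged, which is precisely the phenomenon quantified in the later sections.
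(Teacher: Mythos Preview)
Your argument is correct and follows essentially the same route as the paper: bound the error via the Lipschitz constants of $f_\calI$, apply Gronwall--Bellman with zero initial error, and package the resulting scalar bound as a norm-ball cross section. The only cosmetic differences are that the paper works with the integral form (obtaining $(\eta + L_u\zeta)(t-t_0)e^{L_x(t-t_0)}$ rather than your slightly sharper $\rho(t)$) and does not spell out the local-existence/compactness step you added to justify uniform Lipschitz constants; your extra care there is welcome but not a different idea.
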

\begin{proof}
    \begin{align*}
        \norm{x(t) - \tilde{x}(t)} &\leq \int_{t_0}^{t} \norm{f(x, u) + w - f(\tilde{x}, \tilde{u})} ds \quad \forall t \geq t_0 \\
        &\leq \int_{t_0}^{t} L_u\norm{u - \tilde{u}} + L_x\norm{x - \tilde{x}} ds + \eta (t - t_0) \\
        &\leq (\eta + L_u\zeta)(t - t_0)e^{L_x (t - t_0)}
    \end{align*}
    where $L_u$ and $L_x$ are the Lipschitz constant of the system for the input and the state respectively. We used the fact that $x(t_0) = \tilde{x}(t_0)$, and the last step uses the Gronwall-Bellman Inequality \cite{khalil_nonlinear_2002}. Therefore, we can construct a tube of circular cross section, specifically, $\calT(t) = \set{x}{\norm{x} \leq (\eta + L_u\zeta)(t - t_0)e^{L_x (t - t_0)}}$. We can trivially see that for $t = t_0$ $\calT(t) = \{0\}$ and $\diam{\calT(t)} = (\eta + L_u\zeta)(t - t_0)e^{L_x (t - t_0)}$ which is non-decreasing.
\end{proof}
\begin{remark}
    Note if we only apply the feed forward input then $\norm{x(t) - \tilde{x}(t)} \leq \eta(t - t_i)e^{L_x (t - t_i)}$ and we can recover a tube even with no low level controller.
\end{remark}

As time increases, the tube's diameter may increase to infinity. We only require that the tube has finite diameter over some finite time. Therefore, we cannot plan according to classical tube MPC.

Many methods for certifying tubes require the system to be feedback linearizable (e.g. \cite{csomay-shanklin_multi-rate_2022, lopez_dynamic_2019}) which is not the case for many real world systems. Further, it is possible that a controller could, in theory, certify a tube, but the tube size is actually larger than the region of attraction of the controller. If a local linearization is used to create a LQR controller, it could stabilize the system, but the tube size may be larger than the region of attraction of the LQR controller.

Here we quantify the transient benefits of adding a controller that can provide a E-ISS-Lyapunov function certificate to the error tracking system, but where its region of attraction may be smaller than the tube that can be certified. Consider the error about a continuous time trajectory of length $T$:
\begin{align*}
    x_e(t) &:= x(t) - \tilde{x}(t) \\
    u_e(t) &:= u(t) - \tilde{u}(t) \\
    \dot{x}_e &= f_e(x_e, u_e) + w(t)
\end{align*}

\begin{proposition}
\label{prop:small_tube}
    Let $\calT_{\tilde{x}}(t)$ denote the tube generated by Proposition \ref{prop:tube_exists}. Let $R_e$ denote the region of attraction of the controller in the error system. Assume that there exists some $\epsilon > 0$ such that $\set{x_e}{\norm{x_e} \leq \epsilon} \subset R_e$. If the low level controller certifies an E-ISS Lyapunov function, $\calV_e(x_e(t))$, for the undisturbed error system, Assumption \ref{assump:bounded_disturbance} holds, $\tilde{x}(t_0) = x(t_0)$, and $\sigma(\eta)/k_1 \leq \eta$, then there exists a $\tau > 0$ such that
    \begin{equation*}
        x(t) \in \calT'_{\tilde{x}}(t) \quad \forall t \in [t_0, t_0 + \tau]
    \end{equation*}
    where $\diam{\calT'_{\tilde{x}}(t)} \leq \diam{\calT_{\tilde{x}}(t)}$ for all $t \in [t_0, t_0 + \tau]$. If $\tau < T$ then for $t \in [t_0 + \tau, t_0 + T]$ the state satisfies $x(t) \in \calT_{\tilde{x}}(t)$. Further, $\calT'_{\tilde{x}}(t)$ satisfies all the properties in Definition \ref{def:tube}.
\end{proposition}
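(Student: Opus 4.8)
The plan is to use the fact that $x_e(t_0)=0$, which puts the error trajectory inside the region of attraction initially, so that the E-ISS Lyapunov certificate is active on an initial time interval and produces a growth bound on $\norm{x_e(t)}$ that is strictly tighter near $t_0$ than the raw Gronwall estimate underlying Proposition \ref{prop:tube_exists}.

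First I would observe that $\norm{x_e(t_0)}=0$ together with $k_1\norm{x_e}\le \calV_e(x_e)\le k_2\norm{x_e}$ forces $\calV_e(x_e(t_0))=0$, and that $0$ lies in the interior of $B_\epsilon:=\set{x_e}{\norm{x_e}\le\epsilon}\subset R_e$. By continuity of the error solution there is $\tau>0$ with $x_e(t)\in B_\epsilon$ for all $t\in[t_0,t_0+\tau]$; on this interval the E-ISS Lyapunov inequality of Definition \ref{def:iss_lyap} holds, and using $-k_3\norm{x_e}\le-(k_3/k_2)\calV_e(x_e)$ gives $\dot{\calV_e}(x_e(t))\le-(k_3/k_2)\calV_e(x_e(t))+\sigma(\eta)$. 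The comparison lemma \cite{khalil_nonlinear_2002} with $\calV_e(x_e(t_0))=0$ then yields
\[
\calV_e(x_e(t))\;\le\;\tfrac{k_2\sigma(\eta)}{k_3}\bigl(1-e^{-\frac{k_3}{k_2}(t-t_0)}\bigr)\;=:\;\beta(t-t_0),
\]
so $\norm{x_e(t)}\le\bar\beta(t-t_0)$ with $\bar\beta:=\beta/k_1$. I then define $\calT'(t):=\set{x}{\norm{x}\le\bar\beta(t-t_0)}$ and $\calT'_{\tilde x}(t):=\calT'(t)\oplus\{\tilde x(t)\}$. A standard bootstrap argument (the bound on $\norm{x_e}$ cannot fail before it reaches $\epsilon$) shows such a $\tau>0$ exists, and one can take $\tau\ge T$ whenever $k_2\sigma(\eta)/(k_1k_3)\le\epsilon$. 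The tube axioms of Definition \ref{def:tube} are then routine: $0\in\calT'(t)$, $\diam{\calT'_{\tilde x}(t_0)}=\bar\beta(0)=0$, $\bar\beta$ is non-decreasing since $s\mapsto1-e^{-(k_3/k_2)s}$ is increasing, and $x(t)=\tilde x(t)+x_e(t)\in\calT'_{\tilde x}(t)$ on $[t_0,t_0+\tau]$ by the bound above, so $\calT'$ is valid there.

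The crux, and the one step that genuinely uses the hypothesis $\sigma(\eta)/k_1\le\eta$, is the comparison $\diam{\calT'_{\tilde x}(t)}\le\diam{\calT_{\tilde x}(t)}$, i.e. $\bar\beta(s)\le h(s):=(\eta+L_u\zeta)se^{L_xs}$ for all $s\ge0$. I would argue by tangent lines at $s=0$: $\bar\beta$ is concave on $[0,\infty)$ with $\bar\beta(0)=0$ and $\bar\beta'(0)=\sigma(\eta)/k_1$, hence $\bar\beta(s)\le\bar\beta'(0)\,s$; $h$ is convex on $[0,\infty)$ with $h(0)=0$ and $h'(0)=\eta+L_u\zeta$, hence $h(s)\ge h'(0)\,s$; and $\sigma(\eta)/k_1\le\eta\le\eta+L_u\zeta$ gives $\bar\beta'(0)\le h'(0)$, so $\bar\beta(s)\le\bar\beta'(0)s\le h'(0)s\le h(s)$. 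I expect this convexity/concavity bookkeeping to be the main (though mild) obstacle; everything else is continuity and the comparison lemma.

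Finally, for the last claim: if $\tau<T$, then for $t\in[t_0+\tau,t_0+T]$ the statement $x(t)\in\calT_{\tilde x}(t)$ is immediate from Proposition \ref{prop:tube_exists}, since the trivial tube generated there is valid over the entire horizon $[t_0,t_0+T]$ as long as the input error bound $\norm{\tilde u(t)-u(t)}\le\zeta$ continues to hold. This completes all four assertions of the proposition.
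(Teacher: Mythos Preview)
Your proposal is correct and follows essentially the same route as the paper: derive the bound $\norm{x_e(t)}\le\bar\beta(t-t_0)$ from the E-ISS Lyapunov inequality via the comparison lemma, verify the tube axioms of Definition~\ref{def:tube}, compare $\bar\beta$ against the Gronwall bound from Proposition~\ref{prop:tube_exists}, and fall back to the trivial tube once the error may leave $R_e$. The only difference is cosmetic: for the key inequality $\bar\beta(s)\le h(s)$ the paper argues that $\dot a$ is non-increasing and $\dot b$ is non-decreasing (hence $\dot a(t)\le\dot a(t_0)\le\dot b(t_0)\le\dot b(t)$) and integrates, which is exactly your concavity/convexity tangent-line argument phrased in derivative form.
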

\begin{proof}
    Due to the E-ISS property of the low level controller (Definition \ref{def:iss_lyap}) on the error system
    \begin{align*}
        \dot{\calV}_e(x_e) &\leq -k_3\norm{x_e} + \sigma(\eta) \\
        &\leq -k_3 k_2^{-1} \calV_e(x_e) + \sigma(\eta).
    \end{align*}
    Let $\gamma := k_3 k_2^{-1}$. By the comparison lemma \cite{khalil_nonlinear_2002} it follows that
    \begin{multline*}
        \calV_e(x_e(t)) \leq \calV_e(x_e(t_0))e^{-\gamma (t-t_0)} \\+ \sigma(\eta)\gamma^{-1}(1 - e^{-\gamma (t - t_0)}).
    \end{multline*}
    Then by Definition \ref{def:iss_lyap} we get
    \begin{multline*}
        \norm{x_e(t)} \leq k_2k_1^{-1}\norm{x_e(t_0)}e^{-\gamma (t-t_0)} \\+ \sigma(\eta)\gamma^{-1}k_1^{-1}(1 - e^{-\gamma (t - t_0)}).
    \end{multline*}
    By assumption $x_e(t_0) = 0$, leading to
    \begin{equation*}
        \norm{x_e(t)} \leq \sigma(\eta)\gamma^{-1}k_1^{-1}(1 - e^{-\gamma (t - t_0)}).
    \end{equation*}
    Here we compare to the bound in Proposition \ref{prop:tube_exists} when just the feed forward input is applied. Therefore we want to give conditions under which
    \begin{equation*}
        \underbrace{\sigma(\eta)\gamma^{-1}k_1^{-1}(1 - e^{-\gamma (t - t_0)})}_{a(t)} \leq \underbrace{\eta (t - t_0) e^{L_x (t - t_0)}}_{b(t)}.
    \end{equation*}
     We will show the inequality via the comparison lemma. Taking derivatives,
    \begin{align*}
        \dot{a}(t) &= -\gamma a(t) + \sigma(\eta)/k_1 \\
        \dot{b}(t) &= L_x b(t) + \eta e^{L_x (t - t_0)}.
    \end{align*}
    By their definition $a(t_0) = b(t_0) = 0$ and therefore $\dot{a}(t_0) \leq \dot{b}(t_0)$ if $\sigma(\eta)/k_1 \leq \eta$. Then note that $\dot{a}(t) \leq \dot{a}(t_0)$ and $\dot{b}(t) \geq \dot{b}(t_0)$ for all $t > t_0$ because $\gamma > 0$, $a(t) > 0$, and $b(t) > 0$. Thus by the comparison lemma $a(t) \leq b(t) \; \forall t > t_0$. We can thus get the cross section $\calT'(t) = \set{x}{\norm{x} \leq a(t)}$ with $\diam{\calT'_{\tilde{x}}(t)} = a(t)$. This trivially satisfies the conditions in Definition \ref{def:tube}.

    Define $\tau$ as follows
    \begin{argmini!}
        {t \geq 0}{t \label{eq:tr_objective}}
        {\label{eq:tr_opt}}{\tau = }
        \addConstraint{\set{x_e}{\norm{x_e} \leq a(t_R)}}{\not \subseteq R_e \label{eq:C1_tr_opt}}{},
    \end{argmini!}
    where $\tau = \infty$ if the problem is infeasible. If the set defined by $a(t)$ is completely contained in the region of attraction of the controller, then \eqref{eq:tr_opt} will be infeasible and thus $\tau = \infty$ which certifies a finite sized positive invariant set. Otherwise, under the assumption on the region of attraction, continuity of $a(t)$, and because $a(t_0) = 0$, it follows that there is some $t_1 > 0$ such that $\set{x_e}{\norm{x_e} \leq a(t)} \subseteq R_e \; \forall t \leq t_1$. This implies that $\infty > \tau > t_1 > 0$. At $\tau$ the state will leave the region of attraction of the controller. At this point the controller can no longer provide additional stabilizing action, and thus the tube generated by $a(t)$ is no longer valid. In this case, after the region of attraction is lost, we can bound the state through the bound in proposition \ref{prop:tube_exists}.  
\end{proof}
\begin{remark}
    This highlights that even if we cannot certify a finite diameter tube, adding a controller can shrink the tube size over some time.
\end{remark}

Given these tube results, so long as MPC is re-planned from the current state of the system, then the state error can never be larger than $\diam{\calT(\delta t)}$ where $\delta t$ is the computation period of MPC. Therefore we will use $\diam{\calT(\delta t)}$ as the error tube for planning. To stay with MPC theory, we assume that the integration step matches the MPC computation period: $\delta t_d = \delta t$, although in practice it is common to let $\delta t \leq \delta t_d$. 

\begin{definition} [Tube Based Robust Asymptotic Stability]
    The origin is robustly asymptotically stable in $\calF \subset \bbR^n$ if there exists some $\delta > 0$ such that for all $\diam{\calT(\delta t)} \leq \delta$ we have that both $\calF$ is robustly positive invariant and there exists a $\beta \in \calK \calL$ and $\gamma \in \calK$ such that for each $\bar{x}^0 \in \calF$ we have that
    \begin{equation*}
        \norm{\bar{x}^j} \leq \beta(\norm{\bar{x}^0}, j) + \gamma(\diam{\calT(\delta t)}).
    \end{equation*}
\end{definition}
The following proposition is a tube version of Theorem 21 in \cite{allan_inherent_2017} modified for the single domain hybrid case.
\begin{proposition} [Inherent Robustness for Single Domain MPC]
\label{prop:nominal_robustness_final}
        Let Assumptions \ref{assump:cont_cost} - \ref{assump:properties_of_eq} hold. Denote the nominal region of attraction of \eqref{eq:ll_mpc} in $\calD_F$ as $\calF_F$. Let $\calI(k) = \calD_F \; \forall k$. For all $\rho > 0$ such that $\Pi_x(\levset{\rho}{J^L}) \subseteq \ \calX_{\calI, G}(k) \subset \calD_F$ there exists some $\delta > 0$ such that if $\diam{\calT(\delta t)} \leq \delta$ then the equilibrium is robustly asymptotically stable in the set $\calS := \Pi_x(\levset{\rho}{J^L}) \cap \calF_F$ under the closed loop controller defined by \eqref{eq:ll_mpc}.
\end{proposition}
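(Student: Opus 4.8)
The plan is to specialize the fixed mode problem to a single (non-switching) domain, recognize the disturbed closed loop as an additively perturbed nominal MPC, and then invoke the inherent robustness theory of \cite{allan_inherent_2017}. First I would restrict \eqref{eq:ll_mpc} to the case $\calI(k) = \calD_F$ for all $k$. In this regime the dynamics \eqref{eq:dynamics_fixed_modes} reduce to $g_{\calI,G}(\bar x,\bar u,k) = \bar f_{\calD_F}(\bar x,\bar u)$ (no reset map ever applies, since $\calI(k) = \calI(k+1)$), the input set is the fixed set $\hat{\calU}_{\calD_F}$, and $\calX_{\calI,G}(k)$ collapses to the single time-invariant open set $\set{x}{h_p(x) < c_p \ \forall p \in P_{\calD_F}}$ of states that have not triggered a guard out of $\calD_F$. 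So \eqref{eq:ll_mpc} becomes a time-invariant, state-constrained MPC with terminal cost $\Phi$ and terminal set $\calE$. Under Assumptions \ref{assump:cont_cost}--\ref{assump:properties_of_eq}, standard MPC arguments \cite{rawlings_model_2017} show the optimal value function $J^L$ is a Lyapunov function for the nominal closed loop on its feasible region: it is bounded below by $\alpha_L(\norm{\cdot})$ (Assumption \ref{assump:stage_cost_bound}), bounded above near the origin via continuity (Assumption \ref{assump:cont_cost}) and the terminal ingredients (Assumptions \ref{assump:constraint_sets}, \ref{assump:terminal_stability}), and obeys $J^L(\bar f_{\calD_F}(x,\kappa(x))) \leq J^L(x) - L(x,\kappa(x))$, where $\kappa$ is the MPC law; hence the sub-level sets $\levset{\rho}{J^L}$ are nominally positively invariant and the origin is nominally asymptotically stable on $\calS$.

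Second I would cast the disturbed loop as a perturbation of this nominal one. Because the fixed mode MPC is re-solved from the measured state every $\delta t = \delta t_d$, the condition $\tilde x(t_0) = x(t_0)$ of Proposition \ref{prop:tube_exists} holds at the start of each period, so by that proposition (or by the Remark after it, when only the feed-forward input is applied) the true state deviates from the nominal prediction by at most $\diam{\calT(\delta t)}$ over one period. Since the nominal end-of-period prediction is $\bar f_{\calD_F}(\bar x^j,\kappa(\bar x^j))$ by the discretization \eqref{eq:discrete_int}, the closed loop satisfies $\bar x^{j+1} = \bar f_{\calD_F}(\bar x^j,\kappa(\bar x^j)) + e^j$ with $\norm{e^j} \leq \diam{\calT(\delta t)}$ for all $j$ — exactly the perturbation model handled by Theorem 21 of \cite{allan_inherent_2017}.

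Third I would verify the hypotheses of that theorem on the compact set $\calS = \Pi_x(\levset{\rho}{J^L}) \cap \calF_F$ and quote its conclusion. The choice of $\rho$ with $\Pi_x(\levset{\rho}{J^L}) \subseteq \calX_{\calI,G}(k)$ forces nominal trajectories starting in the sub-level set to meet the state constraints, and since $\Pi_x(\levset{\rho}{J^L})$ is bounded and its closure lies in the open set $\calX_{\calI,G}(k)$ there is a strictly positive margin to the complement of $\calX_{\calI,G}(k)$; intersecting with $\calF_F$ keeps us inside the nominal region of attraction where $J^L$ is finite and the decrease property holds. Theorem 21 of \cite{allan_inherent_2017} then yields a $\delta > 0$ such that whenever $\diam{\calT(\delta t)} \leq \delta$: (i) $\calS$ is robustly positively invariant (the perturbation is absorbed by the nominal decrease of $J^L$ and stays within the constraint margin, preserving recursive feasibility), and (ii) there are $\beta \in \calK\calL$ and $\gamma \in \calK$ with $\norm{\bar x^j} \leq \beta(\norm{\bar x^0},j) + \gamma(\diam{\calT(\delta t)})$ for all $\bar x^0 \in \calS$, which is the asserted tube-based robust asymptotic stability.

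The main obstacle is that, because of the guard-avoidance state constraints, the value function $J^L$ need not be continuous, so a naive continuous-Lyapunov robustness argument is not available; this is precisely why the proof must route through the framework of \cite{allan_inherent_2017}, which is designed to tolerate such discontinuities via perturbed difference inclusions and the notion of inherent/nominal robustness. The remaining care is in the two places where the hybrid setting enters: (a) confirming that the disturbance magnitude is genuinely controlled by the one-period tube diameter $\diam{\calT(\delta t)}$, which relies on the replanning-from-current-state structure and Proposition \ref{prop:tube_exists}; and (b) choosing $\delta$ small relative to the gap between $\Pi_x(\levset{\rho}{J^L})$ and $\calX_{\calI,G}(k)$ so recursive feasibility of the state constraints survives the perturbation.
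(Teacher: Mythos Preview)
Your proposal is correct and follows essentially the same route as the paper: reduce \eqref{eq:ll_mpc} in the single-domain case to a standard time-invariant MPC, bound the one-step state error by the tube diameter $\diam{\calT(\delta t)}$ in place of a raw disturbance, and then invoke Theorem~21 of \cite{allan_inherent_2017}. The paper's proof is considerably terser---it simply notes continuity of $\bar f$ on $\calS$, states the tube-diameter bound $\norm{\bar x^{j+1} - \bx^j_1} \leq \diam{\calT_{\bx^j_1}(\delta t)}$, and observes that $\calS \cap \calG = \emptyset$ by construction---whereas you supply the surrounding scaffolding (why $J^L$ is a nominal Lyapunov function, why the margin to the guards is positive, why the Allan framework is needed despite possible discontinuity of $J^L$); but the logical skeleton is the same.
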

\begin{proof}
    By construction, $\bar{f}$ is continuous in $\calS$ and $\bar{x}^j \in \calX_{\calI, G} \; \forall k$, so all the conditions of Theorem 21 in \cite{allan_inherent_2017} are satisfied. We leverage this result, but rather than using a disturbance to bound the state error, we use the tube diameter: $\norm{\bar{x}^{j+1} - \bx^j_1} \leq \diam{\calT_{\bx^j_1}(\delta t)}$. Then by the results of Theorem 21 in \cite{allan_inherent_2017}, the origin is robustly asymptotically stable in $\calS$ if no guard conditions are met. By construction, $\calS \cap \calG = \emptyset$ thus rendering the set robustly asymptotically stable under the controller \eqref{eq:ll_mpc}. 
\end{proof}
\begin{remark}
    This tells us that we get a ISS type condition for the system where the disturbance is quantified through the tube size. In general decreasing $\delta t$ will cause the tube to be smaller, and thus the positively invariant set will be smaller.
\end{remark}
To control this system through multiple domains in the face of disturbances, we must create a robust fixed mode MPC. There are two key changes introduced: first, we tighten the guard-avoid state constraints as dictated by the tube, and, secondly, an additional ``virtual" pass through node is introduced to force the true state to trigger the desired guard conditions. This virtual node is added into each domain where a guard condition is met. The node will be constrained to move ``past" the guard such that the entire error tube passes through the guard as depicted in Fig. \ref{fig:virtual_nodes}. The last time node in the $n$th domain is denoted as $k_n$. We define the dynamics used in this version of MPC as follows
\begin{equation*}
    \label{eq:dynamics_fixed_modes_robust}
    \tilde{g}_{\calI, G}(\bar{x}, \bar{u}, k) = 
    \begin{cases}
         \bar{f}_{\calI(k)}(\bar{x}, \bar{u}) &\text{if} \; \calI(k) = \calI(k+1) \\
         \calR_{G(k)}(x(t_g)) &\text{if} \; \calI(k) \neq \calI(k + 1) \\
    \end{cases}
\end{equation*}
where $x(t_g)$ denotes the state at which the planned continuous time system impacts the guard. At a given solve time, $j$, $x(t_g)$ is defined completely by $\bar{x}^j_{k_n - 1}$, $\bar{x}^j_{k_n}$, and $\bar{u}^j_{k_n}$ which just makes the second case a normal nonlinear constraint.
\begin{remark}
    Here we implicitly assume that the domain and dynamics are defined beyond the guard. Due to the guard impact constraint, there is a time when $\bar{x}^j \in \calT_{\bx^{j-1}_1}(\delta t)$ with $\bx^{j-1}_1 \in G(k_n)$ for each planned guard. Thus we further assume that for any $\bar{x} \in \calT_{x \in G(k_n)}(\delta t)$ that $\set{\bar{u}}{\bar{x}^+ = \bar{f}(\bar{x}, \bar{u}), \: \bar{u} \in \hat{\calU}_{\calI(k)}, \: \calT_{\bar{x}^+}(\delta t) \subset  \set{x}{h_{G(k_n)}(x) \geq c_{G(k_n)}}} \neq \emptyset$. This is true for many robotic systems as the guards are generally physical surfaces and the dynamics are defined in all of $\bbR^n$.
\end{remark}

\begin{figure}
    \vspace{2mm}
    \centering
    \includegraphics[width=0.6\linewidth]{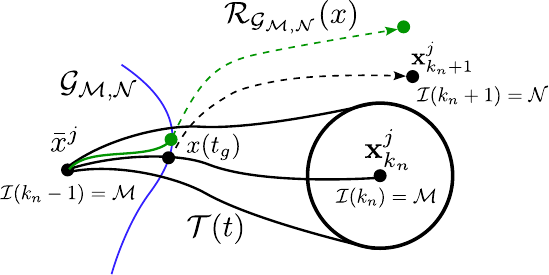}
    \vspace{-3mm}
    \caption{Cartoon depiction of the additional node. The blue line depicts the guard, the solid green line depicts the flow of the system, and the dotted lines show the action of the reset map. Given that the tube at the next node is completely past the guard, at some point the true state will hit the guard.}
    \vspace{-4mm}
    \label{fig:virtual_nodes}
\end{figure}
We shrink the guard-avoid constraints as follows:
\begin{equation*}
    \tilde{\calX}_{\calI, G}(k) = \begin{cases}
        \set{x}{h_{G(k)}(x) = c_{G(k)}} \; \text{if} \; k = k_n - 1 \\
        \set{x}{h_{p}(y) < c_{p} \forall p \in P_\calI(k), \forall y \in \calT_x(\delta t)} \: \text{o.w.}
    \end{cases} 
\end{equation*}
We must also shrink the input constraints by the maximum amount of input used by the low level controller. We denote this shrunk set by $\tilde{\calU}_{\calI} \subset \bbR^{\hat{m}}$. 

The robust fixed mode MPC problem at time step $j$ is
\begin{mini!}[1]
{\bx, \; \bu}{J^L(\bx, \bu)}
{\label{eq:robust_ll_mpc}}{}
\addConstraint{\bx_0}{= \bar{x}^j \label{eq:C2_ll_mpc}}
\addConstraint{\bx_{k+1}}{= \tilde{g}_{\calI, G}(\bx_k, \bu_k, k) \quad \label{eq:dyn_ll_mpc}}{\forall k \in [0, \: N]}
\addConstraint{\bu_k}{\in \tilde{\calU}_{\calI}(k) \quad \label{eq:C3_ll_mpc}}{\forall k \in [0, \: N-1]}
\addConstraint{\bx_k}{\in \tilde{\calX}_{\calI, G}(k) \quad \label{eq:C3_ll_mpc}}{\forall k \in [1, \: N]}
\addConstraint{\calT_{\bx_{k_n}}(\delta t)}{\subset \set{x}{h_{G(k)}(x) \geq c_{G(k)}} \label{eq:robust_ll_mpc_tube_guard}} {\forall k_n}
\addConstraint{\bx_N}{\in \calE \label{eq:C5_ll_mpc}}
\end{mini!}
Note that it is possible to re-formulate the hybrid problem with the same shrunk guard-avoid constraints so that feasibility of this robust hybrid problem implies feasibility of the robust fixed mode problem.

When \eqref{eq:robust_ll_mpc} is solved in the loop, we need to change how the time nodes are updated relative to the strategy for \eqref{eq:ll_mpc}. The previous node from the current domain is still removed and an additional node is added to the final domain, but we must take special care when interacting with the guard. When the guard is impacted by the true continuous time system state, we immediately remove all remaining nodes in that domain, including the virtual node, adding them to the final domain, and re-compute the MPC solution from the new state.

In the sequel, we will be concerned with the benefits of the modified constraints for the fixed mode MPC when the optimization is feasible, and therefore we choose to work under the following assumption.

\begin{assumption}
\label{assump:feasible_rpi}
    Let $\tilde{\calF} \subset \bbR^n$ be the set of states where \eqref{eq:robust_ll_mpc} is feasible. There exists some $\epsilon > 0$ such that for all $\diam{\calT(\delta t)} \leq \epsilon$, there is a set $\hat{\calF} \subset \tilde{\calF}$ that is robustly positive invariant under the closed loop MPC.
\end{assumption}
\begin{remark}
    Although this may sound like a strong assumption, we find in practice for robotic systems that this assumption is often true. Further, in situations where the fixed mode MPC is not feasible, if the robust version of the hybrid problem is feasible, then re-solving the hybrid MPC will put the current state in a feasible set. In fact, this is the only reason to use the hybrid optimal control problem in a MPC fashion over just selecting a fixed mode sequence once\footnote{Theoretically, better optimality could be achieved, but as discussed above, optimality is often seen as a secondary objective in robotic systems. Secondly, it can be very difficult to converge to a global optimum given the non-convex and combinatorial nature of the problem.}. Every layer in the stack provides some form of robustness. Given this, there are a number of tools presented in this paper to reduce the effect of disturbances and thus keep this problem feasible: use of a low level controller, quicker MPC computation periods, and re-computation of the hybrid problem as often as possible.
\end{remark}

\begin{theorem}
\label{thm:ll_robust_mpc}
    Consider $\calI, G$ generated by a feasible solution to Eq. \eqref{eq:hl_mpc} and modified for additional robustness as discussed above. Let Assumptions \ref{assump:cont_cost}-\ref{assump:feasible_rpi} hold. If the image of the last reset map is a subset of $\calS$, and $\calT(\delta t)$ is valid for any disturbances $w \in \calW$ then, for any $x^0 \in \hat{\calF}$, the state will converge to $\calS$ in finite time and then become robustly asymptotically stable to the equilibrium.
\end{theorem}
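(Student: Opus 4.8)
The plan is to split the closed-loop trajectory into two phases: a finite-time \emph{transit} phase through the scheduled domains, terminated by the last reset map, and a \emph{terminal} phase in $\calS$ governed by Proposition~\ref{prop:nominal_robustness_final}. Throughout the transit phase, Assumption~\ref{assump:feasible_rpi} keeps \eqref{eq:robust_ll_mpc} feasible and the closed-loop state in the robustly positive invariant set $\hat{\calF}$, while the assumed validity of $\calT(\delta t)$ for every $w\in\calW$, together with re-solving \eqref{eq:robust_ll_mpc} from the measured state every $\delta t$, gives $\bar{x}^{j+1}\in\calT_{\bx^j_1}(\delta t)$ whenever no guard is crossed during $[t_j,t_j+\delta t]$; the standing remark on discrete-versus-continuous constraint satisfaction extends this containment to all inter-node times. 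Consequently, for every non-virtual node the tightened constraint $\tilde{\calX}_{\calI,G}(k)$ forces the entire tube $\calT_{\bx_k}(\delta t)$ onto the ``not triggered'' side of each undesired guard $p\in P_\calI(k)$, so the true execution cannot leave the mode/guard sequence produced by \eqref{eq:hl_mpc}: no undesired guard fires before the scheduled one.

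Next I would show each scheduled domain is left within finitely many solves. The receding-horizon update removes one node of the current domain per solve (or collapses all remaining nodes of that domain on a guard impact), so after a number of solves bounded in terms of $N$ the plan reaches a configuration in which $\bx_0$ is the last real node of the current domain and the next planned node $\bx_1$ is the virtual node; the remark that the dynamics are defined beyond the guard makes this plan well posed, and \eqref{eq:robust_ll_mpc_tube_guard} puts $\calT_{\bx_1}(\delta t)\subset\set{x}{h_G(x)\geq c_G}$. Suppose, toward a contradiction, the guard is not crossed during $[t_j,t_j+\delta t]$. Then $x(\cdot)$ is continuous there and $\bar{x}^{j+1}=x(t_j+\delta t)\in\calT_{\bx_1}(\delta t)$, so $h_G(\bar{x}^{j+1})\geq c_G$; but $\bar{x}^j$ has not yet reached the guard, so $h_G(\bar{x}^j)<c_G$, and the intermediate value theorem (continuity of $h_G$) gives $t_g\in(t_j,t_j+\delta t)$ with $h_G(x(t_g))=c_G$, a contradiction. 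Hence the true state hits the guard at some $t_g\leq t_j+\delta t$; the node-update rule applies $\calR_G(x(t_g))$, moves the remaining nodes of that domain into $\calD_F$, and --- by Assumption~\ref{assump:feasible_rpi}, with re-solving the robust hybrid problem as the fallback noted after \eqref{eq:robust_ll_mpc} --- the post-reset state is again in the feasible invariant set. Since by Assumption~\ref{assump:reset_not_on_guard} the schedule has at most $r_N=N/2$ resets, the last reset occurs within a finite time $\bar t<\infty$.

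For the terminal phase, since $x(t_g)$ lies on the final guard, $\text{Im}(\calR_{G_{\text{last}}})\subseteq\calS$ by hypothesis gives $\calR_{G_{\text{last}}}(x(t_g))\in\calS$, and the node updates have made $\calI(k)=\calD_F$ for all $k$. In $\calS$ one has $\calS\cap\calG=\emptyset$ (from the construction in Proposition~\ref{prop:nominal_robustness_final}), so the guard-avoid and virtual-node constraints of \eqref{eq:robust_ll_mpc} are inactive and it coincides there with the single-domain MPC \eqref{eq:ll_mpc}, while $\tilde{\calU}_{\calD_F}$ remains compact and contains the origin, so Assumptions~\ref{assump:cont_cost}--\ref{assump:properties_of_eq} still hold. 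Taking $\diam{\calT(\delta t)}$ below the minimum of the $\delta$ furnished by Proposition~\ref{prop:nominal_robustness_final} and the $\epsilon$ of Assumption~\ref{assump:feasible_rpi}, Proposition~\ref{prop:nominal_robustness_final} then yields robust asymptotic stability of the equilibrium in $\calS$, which is the claim.

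I expect the delicate step to be the forced guard crossing: one must align the virtual-node bookkeeping so that the plan genuinely reaches a configuration where the node immediately after the last real node of a domain has its entire $\delta t$-tube strictly past the guard (this is where \eqref{eq:robust_ll_mpc_tube_guard} and the ``dynamics defined beyond the guard'' remark are essential), and one must control the disturbed continuous trajectory on the open interval $(t_j,t_j+\delta t)$ finely enough --- via the standing discrete-to-continuous remark --- to invoke the intermediate value theorem cleanly. A secondary subtlety is that feasibility must not be lost at the reset instants, which is precisely why Assumption~\ref{assump:feasible_rpi} (and the robust-hybrid resolve) is invoked there rather than a bare continuity argument.
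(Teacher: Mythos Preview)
Your proposal is correct and follows essentially the same approach as the paper's proof: both arguments (i) use the tightened constraints $\tilde{\calX}_{\calI,G}$ together with the tube containment $\bar{x}^{j+1}\in\calT_{\bx^j_1}(\delta t)$ to rule out undesired guard crossings, (ii) use the virtual-node constraint \eqref{eq:robust_ll_mpc_tube_guard} and the intermediate value theorem applied to $h_G$ along the continuous flow to force the scheduled guard to fire within one $\delta t$ once only the virtual node remains, and (iii) invoke the hypothesis on the image of the last reset plus Proposition~\ref{prop:nominal_robustness_final} for the terminal phase. Your write-up is somewhat more explicit about the node bookkeeping, the $r_N$ bound, and why \eqref{eq:robust_ll_mpc} collapses to \eqref{eq:ll_mpc} inside $\calS$, but these are elaborations of the same line rather than a different route.
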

\begin{proof}
    Given the state $\bar{x}^j$, the trajectory satisfies $\bx^j_1 \in \tilde{\calX}_{\calI, G}(1)$, and therefore $\bar{x}^{j+1}$ is within the guard-avoid constraints given by $\calX_{\calI, G}(1)$. Recursively applying this argument yields that the guard-avoid state constraints are robustly satisfied under Assumption \ref{assump:feasible_rpi}.
    
    Under the in-the-loop node modification strategy proposed above, we have that for each domain with a guard impact constraint there exists a time step where $\bar{x}^j \in \calT_{\bx^{j-1}_{k_n - 1}}(\delta t)$ with $\bx^{j-1}_{k_n - 1} \in G(k_n)$. If the guard constraint has been satisfied, then the remaining nodes are moved to the final domain and the MPC is re-planned. Otherwise, the only node left in this domain beyond the initial condition is the $k_n$th node, which is the virtual node. Denote the flow of the continuous time system from this state at time $t$ as $\varphi(\bar{x}^j, t)$, which is continuous in this domain. Because $\bar{x}^j \notin G(k_n)$, $h_{G(k_n)}(\varphi(\bar{x}^j, 0)) < c_{G(k_n)}$. Ignoring the effects of the reset map, there exists a time, $t_2 \leq \delta t$ when $h_{G(k_n)}(\varphi(x^j, t_2)) \geq c_{G(k_n)}$ from the tube definition and $\eqref{eq:robust_ll_mpc_tube_guard}$. By the intermediate value theorem, there is a time, $t_3 \leq t_2$ when $h_{G(k)}(\varphi(\bar{x}^j, t_3)) = c_{G(k_n)}$. Since the reset map happens only exactly on the guard, it is safe to ignore the effects for the sake of this argument. When the state hits the guard, the reset map takes action, and at that point, the condition has been satisfied.

    From this, the state will traverse the domain sequence under the effects of disturbances, and therefore, the state arrives in $\calD_F$ in finite time. By assumption, this state is in $\calS$. Then the results of Proposition \ref{prop:nominal_robustness_final} apply.
\end{proof}
\section{Simulation Results}
To demonstrate the layered MPC approach we consider its application to a bouncing ball and walking robot.

\textbf{Bouncing ball.}  The first example considers a ball with nonlinear drag, a horizontal wall, a vertical wall, and a circular guard in the first quadrant. There are two domains: inside and outside of the circle. Denote the state as $s = [x, y, \dot{x}, \dot{y}]^T$. When outside the circle, the dynamics are:
\begin{align*}
    \begin{bmatrix}
        \ddot{x} \\ \ddot{y}
    \end{bmatrix} = \begin{bmatrix}
        -\gamma \norm{(\dot{x}, \dot{y})}^2 \dot{x} + \min\{u_x, 0\}/m \\
        -\gamma \norm{(\dot{x}, \dot{y})}^2 \dot{y} + \min\{u_y, 0\}/m + g \\
    \end{bmatrix} 
\end{align*}
and inside the circle the dynamics become:
\begin{align*}
     \begin{bmatrix}
        \ddot{x} \\ \ddot{y}
    \end{bmatrix} = \begin{bmatrix}
        -\gamma \norm{(\dot{x}, \dot{y})}^2 \dot{x} + u_x/m \\
        -\gamma \norm{(\dot{x}, \dot{y})}^2 \dot{y} + u_y/m \\
    \end{bmatrix}.
\end{align*}
where $\gamma$ is the drag coefficient, $g$ is gravity, and $m$ is the mass. The position and velocities are normal integrators.
The reset maps for the walls are:
\begin{align*}
    \calR_x(s) = [x, y, \dot{x}, -\dot{y}]^T, \quad \calR_y(s) = [x, y, -\dot{x}, \dot{y}]^T.
\end{align*}
The reset map for the circular guard is the identity map.

The goal is to stabilize to the middle of the circle. Disturbance forces happen in all directions. Fig. \ref{fig:bouncing_ball_disturbances} demonstrates how the control architecture can be leveraged to robustly stabilize trajectories. For the simulation, $w \in \calW = \set{(w_x, w_y)}{|w_x| \leq 10, \; |w_y| \leq 10}$, $M = 1$, $\gamma = -0.02$ and $u_i \in [-100, 100] \; i \in \{x, y\}$. Casadi \cite{andersson_casadi_2019} and IPOPT \cite{wachter_implementation_2006} are used to formulate and solve the fixed mode MPC problem. 
The cross entropy method \cite{rubinstein_cross-entropy_nodate} is used to solve the hybrid planning problem.

\begin{figure}
    \vspace{2mm}
    \centering
    \includegraphics[width=1\linewidth]{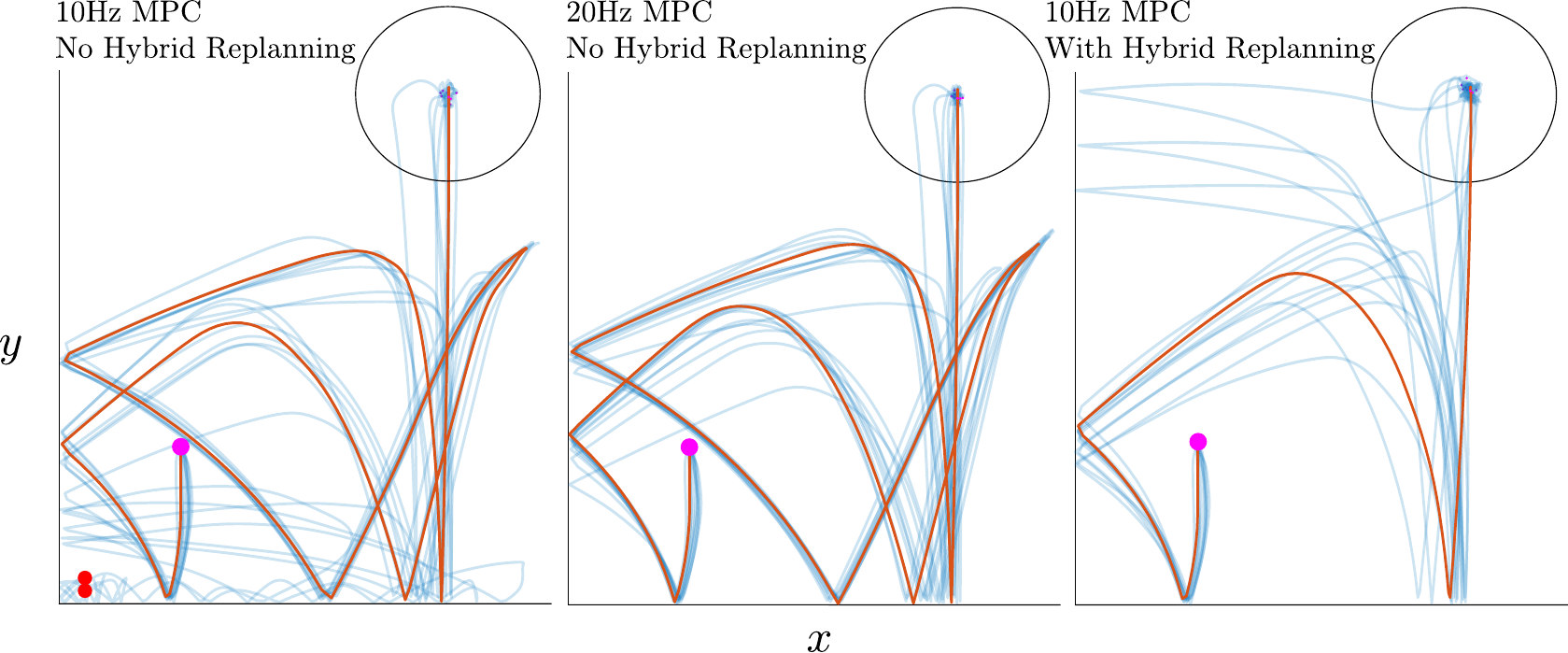}
    \caption{This plot shows the controlled bouncing ball simulated with three different controllers while subject to the same 10 disturbances. The large magenta circles show the initial position, the black lines are the guards, and the red circles indicate unstable trajectories. The orange line shows the trajectory without disturbances. On the left, the MPC is run slow and there is no hybrid re-planning which leads to two unstable trajectories. In the middle, the MPC is run twice as fast and all trajectories stabilize. On the right, the MPC is run slow but hybrid re-planning occurs at each contact; this controller stabilizes all the trajectories. This highlights how the different tools proposed in the paper can lead to a more robust controller.}
    \vspace{-4mm}
    \label{fig:bouncing_ball_disturbances}
\end{figure}

\textbf{Robotic walking.}
The second example considers a five link, full actuated, biped walking up a set of four stairs. The continuous dynamics of the robot are $M(q)\ddot{q} + C(q, \dot{q})\dot{q} + G(q) = \tau$ where $q$ denotes the joint configuration, $M$ is the inertia matrix, $C$ is the Coriolis forces, $G$ is the force of gravity, and $\tau$ is the torque at each joint. We denote the state of the robot as $s = [q, \dot{q}, x, y]$ where $x$ and $y$ give the position of the torso joint of the robot.

Let $r_x(s)$ and $r_y(s)$ denote the $x$ and $y$ positions of the swing foot which are determined from forward kinematics. Each guard is given by a set $\calG_i = \set{s}{r_y(s) = c_i, r_x(s) \in \phi_i}$ where $\phi_i$ gives the set of $x$ coordinates for the step, $c_i$ gives the height of each step, and $i \in \{1,2,3,4\}$. When the guard condition is met a reset map on the joint velocities $\calR(s) := (I - M^{-1}J^T(JM^{-1}J^T)^{-1}J)\dot{q}$ is applied, where $I$ is the identity matrix and $J$ is the jacobian of the swing foot \cite{grizzle_models_2014}. Additionally, the reset map also re-maps the joints between the swing and stance legs.
Fig. \ref{fig:biped_walking} demonstrates the control architecture in use. In this simulation the mass of all the links are 10\% higher than the planning model and additional, random, forces are applied to the torso joint. A continuous time PD controller tracks the MPC trajectory. The hybrid problem is solved by computing the MPC for each potential step option, after removing steps far away, and selecting the best solution. The robot successfully chooses which steps to use, makes contact, and reaches its desired location.

\section{Conclusion}
This work presented a layered MPC approach for robust stabilization of nonlinear hybrid systems. A hybrid MPC determines a feasible domain and guard sequence. A fixed mode MPC uses that to compute a robustly stabilizing solution. A low level controller mitigates disturbances. Each layer of the control architecture helps with the robustness of the system. Many topics warrant future research including the interplay between the two MPC layers and conditions under which the fixed mode MPC goes infeasible.

\begin{figure}
    \vspace{2mm}
    \centering
    \includegraphics[width=0.9\linewidth]{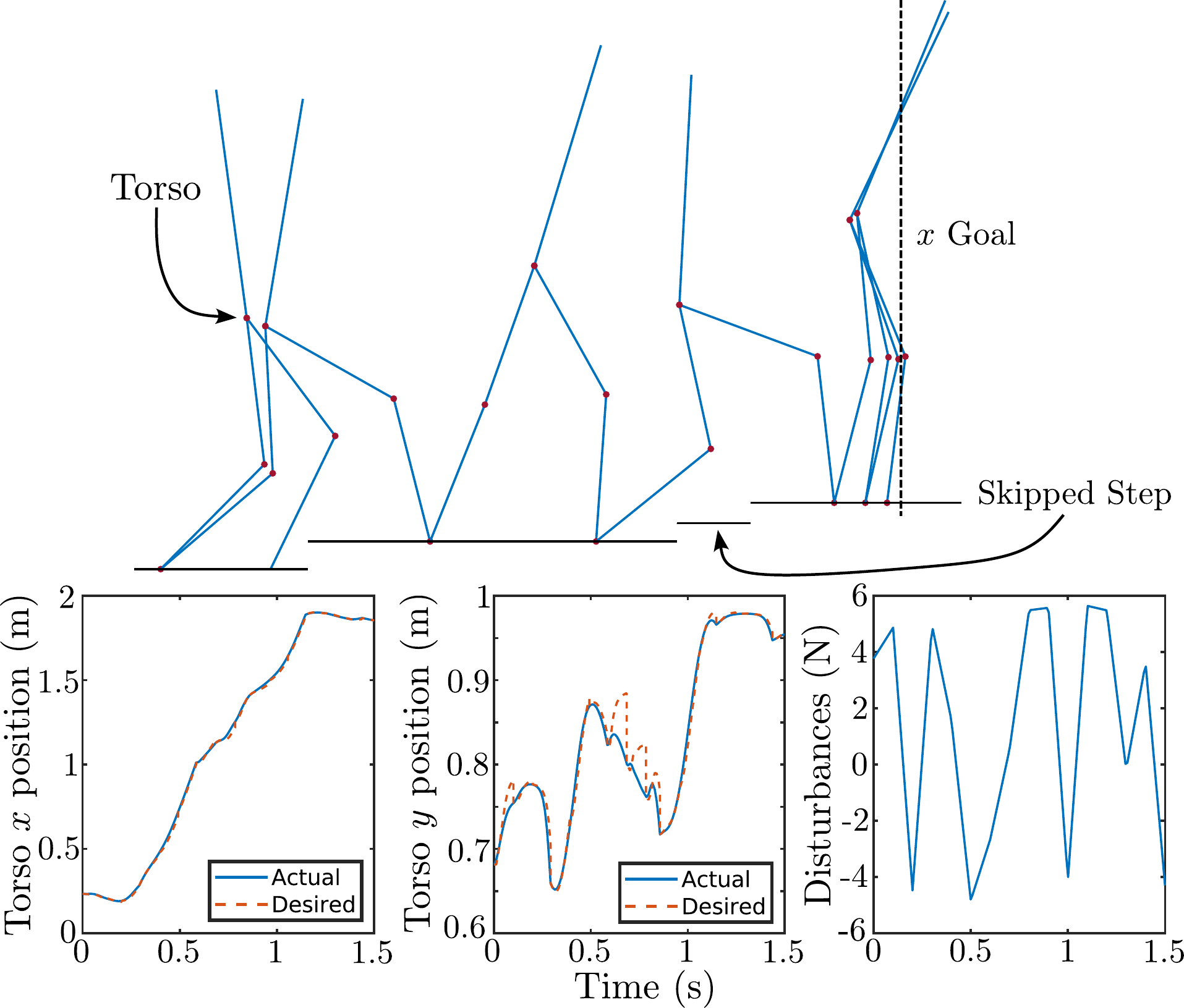}
    \caption{The five link biped traverses a set of stairs with varying lengths and heights. Fixed mode MPC is run at 10Hz and the hybrid MPC is recomputed at each contact. A low level PD controller tracks the MPC target. The biped chooses multiple steps on the longer stair and skips over the shorter stair. The torso joint $x$ and $y$ positions are plotted as well as the disturbance forces applied to the torso in the $x$ direction.
    }
    \vspace{-5mm}
    \label{fig:biped_walking}
\end{figure}

\bibliographystyle{IEEEtran}
\bibliography{Layered_Hybrid_MPC_no_url}
\end{document}